\begin{document}

%
% paper title
% Titles are generally capitalized except for words such as a, an, and, as,
% at, but, by, for, in, nor, of, on, or, the, to and up, which are usually
% not capitalized unless they are the first or last word of the title.
% Linebreaks \\ can be used within to get better formatting as desired.
% Do not put math or special symbols in the title.
\title{Fast and Optimal Adaptive Tracking Control: \\A Novel Meta-Reinforcement Learning via Conditional Generative Adversarial Net}

\author{Mohammad~Mahmoudi,
            Nasser~Sadati, \IEEEmembership{Senior Member,~IEEE}% <-this % stops a space           
\thanks{M. Mahmoudi and N. Sadati are with the Department of Electrical Engineering, Sharif University of Technology, Tehran, Iran (e-mails: m.mahmoudi.f75@gmail.com, sadati@sharif.edu). }}%
% The paper headers
\markboth{Submitted to IEEE Transactions on}%
{Mahmoudi \MakeLowercase{\textit{et al.}}:Fast and Optimal Adaptive Tracking Control: A Novel Meta-Reinforcement Learning via Conditional Generative Adversarial Net}%
% make the title area
\maketitle

% As a general rule, do not put math, special symbols or citations
% in the abstract or keywords.
\begin{abstract}

The control of nonlinear systems with unknown dynamics has been a significant field of research for many years.
This paper presents a novel data-driven optimal adaptive control structure with less control effort and faster adaptation than standard adaptive control counterparts.
The proposed control structure utilizes the system's recorded data to increase the speed of adaptation and performance dramatically.
In this study, we employ a conditional generative adversarial net (CGAN) as a novel central pattern generator to reproduce the steady-state harmonic pattern of the control signals matching the system's uncertainties over a wide range.
We can also use the CGAN architecture as a fault detector.
The CGAN provides a low-dimensional latent space of uncertainties. It enables rapid and convenient adaptation when there are many parametric uncertainties, especially for large-scale systems. 
Then, we introduce a novel meta-reinforcement learning framework to adapt the latent space of CGAN to the
system's uncertainties as an optimal direct adaptive controller without any system identifier. 
Another part of the control structure is a regulator that achieves semi-global asymptotic tracking using the Lyapunov stability analysis. 
Finally, via some simulations, we evaluate the capabilities of the proposed designs on two dynamical systems, a robot manipulator and a large-scale musculoskeletal structure, in the presence of disturbance and perturbation. 
\end{abstract}

% Note that keywords are not normally used for peerreview papers.
\begin{IEEEkeywords}
Optimal adaptive control, central pattern generator (CPG), conditional generative adversarial net (CGAN), data-driven control, meta-reinforcement learning (Meta-RL). 
\end{IEEEkeywords}

% For peer review papers, you can put extra information on the cover
% page as needed:
% \ifCLASSOPTIONpeerreview
% \begin{center} \bfseries EDICS Category: 3-BBND \end{center}
% \fi
%
% For peerreview papers, this IEEEtran command inserts a page break and
% creates the second title. It will be ignored for other modes.
\IEEEpeerreviewmaketitle

\section{Introduction}

\newtheorem{theorem}{Theorem}
\newtheorem{lemma}{Lemma}
\newtheorem{assumption}{Assumption}
\newtheorem{remark}{Remark}

\IEEEPARstart{N}{umerous} physical systems and phenomena are intrinsically nonlinear, and there are many structured and unstructured uncertainties in their dynamic models
\cite{Slotine1991}. Therefore, developing efficient control methods for such systems is of great importance for researchers. Various adaptive and robust controllers have been presented by
\cite{Slotine1991, chen2015, Astrom1994}  
that can be used for control of uncertain nonlinear systems in tracking tasks. 
Almost all require knowledge of the system dynamic equations for stability analysis and designing adaption and control law. 
Consequently, approximation-based adaptive control approaches have been suggested to compensate for the unknown dynamics by employing an approximator (e.g., neural network (NN), fuzzy system, neuro-fuzzy network)
\cite{hagan1999, Jang1997, sarangapani2006, ge2004, Hayakawa2008, ge2002,le2022, Patre2010, Patre2011, Dierks2009, Sharma2012a, Sharma2012b, Shin2012, Yang2015 }.
While some leverage an approximator to estimate dynamic systems as an indirect adaptive control method, others use it directly to approximate the controller as a direct adaptive control method.  

There are different methods to update the NN weights in the NN-based adaptive control. 
Gradient-based methods, reviewed by \cite{hagan1999}, estimate the Jacobin of the system through a NN. In gradient-free methods, the NN weights are chosen by a derivative-free optimization method with high computational costs \cite{Jang1997}. 
These two categories of methods generally do not guarantee the stability of the closed-loop system.
In recent years, Lyapunov-based methods have been suggested. 
They employ Lyapunov stability analysis to drive adaptation laws for updating weights adaptively and stably 
\cite{sarangapani2006, ge2004, ge2002,le2022,Hayakawa2008, Patre2010, Patre2011, Dierks2009, Sharma2012a, Sharma2012b, Shin2012, Yang2015 }.
Some Lyapunov-based methods ensure that the closed-loop system is uniformly ultimately bounded, and indeed the ultimate bound on the tracking error depends on the reconstruction error of the NN
\cite{sarangapani2006, ge2004}. 
To achieve asymptotic tracking stability, reference
\cite{Hayakawa2008}
proposed an approach for a specific form of reconstruction error, various others 
\cite{ge2002, le2022, Patre2010, Patre2011, Dierks2009, Sharma2012a, Sharma2012b, Shin2012, Yang2015}
suggested adding a variable structure control (VSC) term in the control law.
As a VSC, a few of these researchers 
\cite{ge2002,le2022}
used the sign function of error, whereas the rest 
\cite{Patre2010, Patre2011, Dierks2009, Sharma2012a, Sharma2012b, Shin2012, Yang2015}
used the robust integral of the sign of error (RISE)
\cite{Xian2004}
to eliminate the ultimate bound dependent on the reconstruction error of the NN. The combination RISE and NN is also known as the RISE-NN structure. 

The RISE-NN consists of a Lyapunov-based dynamic NN to estimate the uncertain system dynamics incorporated with the RISE controller as a high-gain controller. The NN and RISE, respectively, play the feedforward and feedback control roles in this arrangement, yielding semi-global asymptotic tracking stability
\cite{Patre2010, Patre2011, Dierks2009, Sharma2012a, Sharma2012b, Shin2012, Yang2015 }. In the RISE-NN structure,
the RISE controller covers the poor initial functioning of the NN.
In other words, the high-gain controller compensates for more uncertainties instead of NN, and control system performance is reduced. Hence, these works do not present any evaluation of the capability of NN modeling. 
As experience shows, in the structures mentioned above, if the system's uncertainty changes over time, their performances will decrease. Also, they cannot provide optimal performance.

Controllers based on reinforcement learning (RL) are the other adaptive control methods. 
RL-based controllers are bio-inspired and attempt to model the learning mechanism in animals and humans and update the control system's parameters according to the controller's interaction with the dynamic plant
\cite{KHAN2012}.
In recent years, many policy-gradient algorithms have been introduced 
\cite{Schulman2015, Silver2014 ,Lillicrap2016a}
to control a Markov decision process (MDP) as model-free RL approaches that must interact with the system to a considerable degree to converge to a suitable controller
\cite{Nagabandi2019}. 
Hence, model-free RL faces many challenges and difficulties in the learning process and is inappropriate for real-world systems, particularly complex nonlinear systems with continuous action space.
Meanwhile, model-based approaches obtain the optimal policy by estimating the dynamic model simultaneously. There have been a variety of researches for tracking tasks as model-based RL 
\cite{Hu2020 , Modares2013, Kamalapurkar2017},
a kernel-based dynamic model was investigated by 
\cite{Hu2020}, and 
references \cite{Modares2013, Kamalapurkar2017}
employed the system identification techniques. 
Lately, some researchers 
\cite{Finn2017, Rakelly2019, Nagabandi2019}
have introduced the Meta-RL framework, a powerful concept for improving learning efficiency and speeding up online adaptation.
Therefore, recent developments demonstrate that using the RL framework is advantageous in designing controller structures owing to optimality and adaptivity properties. 

In this paper, we embed the conditional generative adversarial net (CGAN) architecture
\cite{Mirza2014} 
into the RISE-NN structure incorporated into the Meta-RL framework to provide a novel Meta-RL approach for fast and optimal adaptive tracking control. 
Both NN and RISE controllers can collaborate on compensating for uncertainties
in the RISE-NN structures described in earlier studies 
\cite{Patre2010, Patre2011, Dierks2009, Sharma2012a, Sharma2012b, Shin2012, Yang2015}. As a result, the more NN approximation error, the more RISE collaboration, and consequently, high control gain is required to integrate the sign of error term 
\cite{Yang2015}.
Eventually, the control signal should have wide bandwidth and large amplitude chattering, and it may excite the high-order unmodeled dynamics. For this reason, the capability of NN modeling is essential for control system performance. To tackle this issue, we employ the generative adversarial net (GAN) mechanism
\cite{Goodfellow2014} 
instead of the simple NN based on the solid aptitude of the GAN that can model complicated data distributions. The GAN also has many advantages compared with its counterparts
\cite{Creswell2018}. 
This paper is the first study to exploit the strength of the CGAN to build a powerful adaptive control. 
Moreover, we demonstrate that the CGAN architecture is an outstanding adaptable central pattern generator (CPG) for robotic applications
\cite{Ijspeert2008}.
Another significant finding is using CGAN architecture as a data-driven fault detector \cite{ding2014}.
According to
\cite{Chowdhary2010, Adam2012},
using the system's previous data and a experience replay buffer technique is valuable to satisfy the persistent excitation (PE) condition and improves data efficiency in adaptive systems and online RL algorithms. 
In a way, we use the distillation technique and the previous data for training the CGAN and leverage a experience replay buffer in the proposed framework.
The CGAN generates the required harmonic pattern for exciting the system and leads to a dynamic policy for steady-state control. Its latent space represents the system's uncertainties in a low-dimensional space.
It is necessary to match the latent space of the CGAN to the system's uncertainties adaptively. Therefore, we have developed two adaptation mechanisms to adjust the input noise of the CGAN. In the first approach, a Meta-RL framework is proposed as a fast and optimal adaptive control using deep deterministic policy gradient (DDPG)
\cite{Lillicrap2016a} and regarded as the direct approach.
In the second, we offer a model-based approach to compare with the first by using the  extended Kalman filter (EKF) as a system identifier, referred to as the indirect approach.
The RISE controller plays the role of the regulator control by compensating for transient errors; additionally, it ensures that the system is stable and robust to any perturbations, including the reconstruction error of the NN, unmodeled dynamics, and disturbance. In summary, it takes care of the controlled system. We also use Lyapunov stability analysis to show semi-global asymptotic tracking. Finally, simulation results on a planar two-degree-of-freedom (DoF) manipulator illustrate the impact of CGAN and RL on control system performance compared to the previous method \cite{Yang2015}. In addition, we apply the proposed controller to a large-scale musculoskeletal system to demonstrate the scalability of our design.

Therefore, the contributions of this paper are summarized as follows:
\begin{itemize}
\item Proposing a new Meta-RL framework to obtain a fast and optimal adaptive control by embedding CGAN and RL into the RISE structure to ensure asymptotic stability and robustness.

\item Introducing an innovative data-driven adaptable CPG by using CGAN for applications in robot control.

\item Providing a novel data-driven fault detector by using CGAN architecture.
\end{itemize}
We organize the remainder of the paper as follows.
In Section \ref{section: Problem Formulation}, we describe the problem formulation.
Section \ref{section: Proposed Control Schemes} presents the proposed control structures.
In Section \ref{section: Stability Analysis}, we analyze the stability of the proposed control schemes. 
Section \ref{section: Simulation Results} is devoted to the simulation and analysis results.
Finally, Section \ref{section: Conclusion and Future work} concludes the paper.

%%%%%%%%%%%%%%%%%%%%%%%%%%%%%%%%%%%%%%%%%%%%%%%%%%%%%%%%%%%%%%%%%%%%%%%%%%%%%%%%%%%%%%%%%%%%%%%%%%%%%%
\section{Problem Formulation} \label{section: Problem Formulation}
In this paper, we consider a class of $(nm)$th-order, multi-input-multi-output (MIMO) uncertain nonlinear input-affine continuous-time systems described by
\begin{align}\label{eq2.1}
x^{(m)}(t)&=f(x(t),\dot{x}(t),...,x^{(m-1)}(t)) \nonumber \\ 
&+g(x(t),\dot{x}(t),...,x^{(m-1)}(t)) u(t)+d(t) \nonumber \\
y(t) &= x(t),
\end{align}
where $(.)^{(i)}(t)$ indicates the $i^{\text{th}}$ derivative w.r.t. time; $x^{(i)}(t)\in \mathbb{R}^n$, $i=0,\,...,\,m-1$ are the measurable system states, which can be written in a vector form as $\bar{x}(t) \triangleq \left[x(t)^T,\,\dot{x}(t)^T,\,...,\,x^{(m-1)}(t)^T\right]^T \in \mathbb{R}^{nm}$ that denotes the state vector; and $u(t)\in \mathbb{R}^n$, $y(t)\in \mathbb{R}^n$, and $d(t)\in \mathbb{R}^n$ are the control input vector, the output vector, and the unknown bounded disturbance or unmodeled dynamics vector, respectively. Let $f(.)\in \mathbb{R}^n$ and $g(.)\in \mathbb{R}^{n\times n}$ represent the uncertain second-order globally differentiable nonlinear functions. 
Designing optimal adaptive control such that $x(t)$ (as the system output) asymptotically tracks the desired trajectory $x_d(t)\in \mathbb{R}^n$ in the presence of uncertainties and disturbance is the ultimate goal of this research. 
For subsequent analyses, we consider the following assumptions, which have also been exploited in the literature \cite{Patre2010, Patre2011, Dierks2009, Yang2015,Xian2004}.

\begin{assumption} \label{Assumption_G}
The function $g(.)$ is a positive-definite matrix function such that 
\begin{equation}
\underline{g}\|\eta\|^2 \leq \eta^Tg(\bar{x})\eta \leq \bar{g}(\bar{x}) \|\eta\|^2 \indent\forall \eta \in \mathbb{R}^n,\forall \bar{x} \in \mathbb{R}^{nm},
\end{equation}
where $\underline{g}\in \mathbb{R}$ is a positive constant and $\bar{g}(\bar{x})\in \mathbb{R}$ is a positive non-decreasing function, and $\|.\|$ stands for the standard Euclidean norm.
\end{assumption}

\begin{assumption} \label{Assumption_desired}
The desired trajectory $x_d(t)$ is $(m+2)$th-order differentiable w.r.t. time. Also, $x_d(t) $ and all its time derivatives are bounded as follows:
\begin{equation}
\|x_d(t)\|\leq c_{x_{d0}}, \|\dot{x}_d(t)\|\leq c_{x_{d1}},...,\|x^{(m+2)}_d(t)\|\leq c_{x_{d(m+2)}},
\end{equation}
where $c_{x_{d0}} , c_{x_{d1}},...,$ and $c_{x_{d(m+2)}} \in \mathbb{R}^+$ are unknown positive constants.
\end{assumption}

\begin{assumption} \label{Assumption_desired_func} 
There exists a locally Lipschitz function $h:\mathbb{R}^{mn}\rightarrow\mathbb{R}^n$ such that
\begin{equation}\label{eq2.4}
x^{(m)}_d(t)=h(x_d(t),\dot{x}_d(t),...,x^{(m-1)}_d(t)).
\end{equation}
\end{assumption}

\begin{assumption} \label{Assumption_disturbance} 
The disturbance $d(t)$ is bounded and has second-order, bounded derivatives w.r.t. time such that
\begin{equation}
\|d(t)\|\leq c_{d_{0}},\|\dot{d}(t)\|\leq c_{d_{1}},\|\ddot{d}(t)\|\leq c_{d_{2}},
\end{equation}
where $c_{d_{0}} , c_{d_{1}},$ and $c_{d_{2}} \in \mathbb{R}^+$ are unknown positive constants.
\end{assumption}

\begin{remark} 
Under Assumption \ref{Assumption_G}, it is clearly visible that the system \eqref{eq2.1} is full-state feedback linearizable, and according to Theorem 7.3.41 of \cite{vidyasagar2002}, the reachability of the system \eqref{eq2.1} can be shown.
\end{remark}

%%%%%%%%%%%%%%%%%%%%%%%%%%%%%%%%%%%%%%%%%%%%%%%%%%%%%%%%%%%%%%%%%%%%%%%%%%%%%%%%%%%%%%%%%%%%%%%%%%%%%%

\section{Proposed Control Schemes} \label{section: Proposed Control Schemes}
Our objective is to ensure that the system output tracks a desired trajectory. We define tracking error, shown by $\tilde{x}(t)$, as follows:
\begin{equation}
\tilde{x}(t)\triangleq x(t)-x_d(t).
\end{equation}
Also, the open-loop tracking error dynamics, without disturbance, can be represented as 
\begin{equation}\label{eq3.7}
\tilde{x}^{(m)}(t) = f(\bar{x}(t))+g(\bar{x}(t))u(t)-{x}^{(m)}_d(t).
\end{equation}
The error dynamics \eqref{eq3.7} are nonautonomous. To make the control design more convenient and interpretable, we exploit the method described in earlier studies 
\cite{Kiumarsi2018,  Kamalapurkar2017, KAMALAPURKAR2015}
to formulate the problem autonomously. To do so, by substituting the desired trajectory for $\bar{x}$ in the dynamic system \eqref{eq3.7}, according to Assumption \ref{Assumption_G}, we can obtain the desired steady-state control $u_{ss}(x_d)\in\mathbb{R}^n$ as 
\begin{equation}\label{eq3.8}
u_{ss}(\bar{x}_d) \triangleq g(\bar{x}_d)^{-1}(x^{(m)}_d-f(\bar{x}_d)).
\end{equation}
We define an extended state vector $X\in\mathbb{R}^{2nm}$ as
\begin{equation}\label{eq3.9}
X(t) \triangleq \left[\tilde{x}^T, \dot{\tilde{x}}^T,...,\tilde{x}^{{(m-1)}^T},x^T_d,\dot{x}_d^T,...,x^{{(m-1)}^T}_d\right]^T.
\end{equation}
By substituting \eqref{eq2.4} into \eqref{eq3.7} and by adding and subtracting $g(\bar{x}(t))u_{ss}(t)$ to the result with some manipulation, the extended tracking error dynamics, based on the extended state vector, can be rewritten as:
\begin{align}\label{eq3.10}
\begin{bmatrix}
\tilde{x}^{(m)} \\ 
x^{(m)}_d
\end{bmatrix}
&=
\underbrace{
\begin{bmatrix}
f(\bar{\tilde{x}}+\bar{x}_d)+g(\bar{\tilde{x}}+\bar{x}_d)u_{ss}(\bar{x}_d)-h(\bar{x}_d)\\
h(\bar{x}_d)
\end{bmatrix}}_{F(X)}  \notag \\ 
&+
\underbrace{\begin{bmatrix}
g(\bar{\tilde{x}}+\bar{x}_d) \\
0_{n\times n}
\end{bmatrix}}_{G(X)}
u_{reg},
\end{align}
where $u_{reg}\in \mathbb{R}^n$ is the regulation control as follows:
\begin{equation}\label{eq3.11}
u_{reg} \triangleq u-u_{ss}(x_d). 
\end{equation}
Because the functions $f$, $g$, and $h$ are locally Lipschitz, we can say that the functions $F$ and $G$ are also locally Lipschitz. $\tilde{x}\equiv 0$ is the equilibrium point of the first subsystem of \eqref{eq3.10}.
The representation of the error dynamics \eqref{eq3.10} is time-invariant.
According to \eqref{eq3.11}, the control input is divided into two parts as $u=u_{ss}+u_{reg}$.
The former is steady-state control, and the latter is regulation control. In some studies \cite{Khalil2002}, they are also known as feedforward and feedback control, respectively. The steady-state control is the inverse dynamics of the system for the desired trajectory (according to \eqref{eq3.8}). The regulation control is the input of the error dynamics \eqref{eq3.10} and must reduce transient error and take the system to the desired steady-state trajectory.
Researchers
\cite{Patre2010, Patre2011, Dierks2009, Sharma2012a, Sharma2012b, Shin2012, Yang2015 }
have leveraged the RISE controller as the regulator and dynamic NN as the steady-state control. Likewise, reference \cite{Zomaya1994} has employed the RL-based controller as the regulator and the inverse dynamic model as the steady-state control without any stability guarantee in an offline adaptive manner. Additionally, investigations have shown that this method has a small region of attraction (RoA), such that stability cannot be ensured if the initial regulation error is significant. 
Our proposed methods applies the RISE controller and an adaptable CGAN, along with two adaptation approaches, as the regulator and the steady-state control, respectively. Therefore, there are two policies in our proposed control structures: 1) CGAN and 2) RISE. We present two adaptive control schemes based on these two policies for the dynamic system \eqref{eq2.1}. 
All of the components of our approaches are shown in Fig. \ref{fig:F1} and described in the following subsections.
\begin{figure}[!h]
  \centering
  \includegraphics[width=0.7\columnwidth]{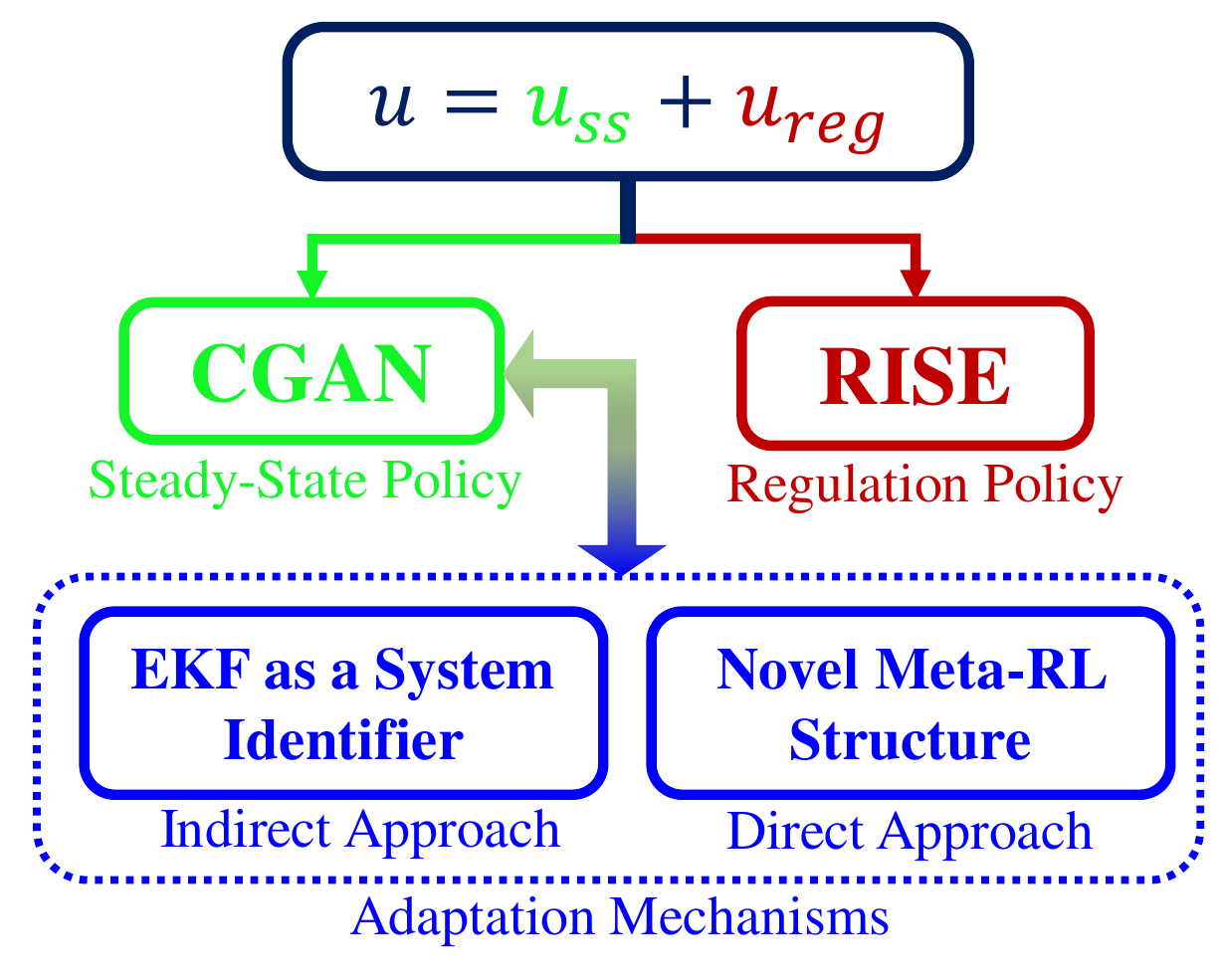}
  \caption{The schematic representation of the proposed control schemes.}\label{fig:F1}
\end{figure}
%%%%%%%%%%%%%%%%%%%%%%%%%%%%%%%%%%%%%%%%%%%%%%%%%%%%%%%%%%%%%%%%%%%%%%%%%%%%%%%%%%%%%%%%%%%%%%%%%%%%%%
\subsection{CGAN: Steady-State Control Policy}
GANs, recently introduced by \cite{Goodfellow2014}, are one of the most fascinating and widely used ideas in machine learning. They have significantly contributed to the progress of artificial intelligence (AI) \cite{Jabbar2021, Creswell2018}. 
GAN is an implicit probabilistic model with outstanding performance in training complex, high-dimensional data distribution.
The generative network transforms a simple distribution into high-dimensional training data distribution. In this framework, another network is a discriminator that trains the generator indirectly in a zero-sum game.
The discriminator and the generator are in a competitive relationship. The discriminator distinguishes the actual data (from the training data set) and the fake data (produced by the generator) and returns a real number between zero and one to indicate the degree to which the sampled data are actual. Concurrently, the generator tries to fool the discriminator by generating data the same as actual data \cite{Goodfellow2014}. 
GANs are used for unsupervised learning, but they are also commonly used for supervised and semi-supervised learning \cite{Rezagholiradeh2018,Olmschenk2019}. The CGAN \cite{Mirza2014} is one of the most famous GAN-based structures for supervised learning. The CGAN aims to generate data conditioned on a given specific label.

As Fig. \ref{fig:F2} shows, this paper uses CGAN to generate the desired steady-state control ($u_{ss}$) by sampling from a low-dimensional multivariate Gaussian (normal) distribution in its input noise ($z$) conditioned on a given specific desired trajectory ($x_d$) as the label.   
Because $u_{ss}$ is the inverse dynamics of the system for $x_d$ and the inverse dynamics has different uncertainties, $u_{ss}$ for a given specific $x_d$ can be varied related to the uncertainties of the dynamic model.
Therefore, using probabilistic models for the generation of $u_{ss}$ is necessary. In general, uncertain inverse dynamics learning is a probabilistic regression task. Reference \cite{Nguyen-tuong2009} used the Gaussian process for uncertain inverse dynamics learning. CGAN, compared to other models, can deal with high-dimensional data and more complicated uncertainties \cite{Rezagholiradeh2018,Olmschenk2019}.
Also, it can represent a low-dimensional latent space of the uncertainties in its input noise and facilitate the adaptation mechanism because searching in the latent space with a lower dimension is more convenient than searching in the original space of the uncertainties.

The required data to train the CGAN for steady-state control are given in $\mathscr{D}_1$, which includes $N$ time-series paired samples of the desired trajectory (label) and the uncertain steady-state control (target) as
\begin{align}
\mathscr{D}_1 = \{label: x^i_d(t),\: target: u^i_{ss}(t)| i=1,...,N\}.
\end{align}

We can obtain the data set $\mathscr{D}_1$ in three ways: 
1) \emph{Using a physics-based inverse dynamic model}: we can generate $u_{ss}$ for some $x_d$ and some parametric and nonparametric uncertainties in the inverse dynamic model based on physical principles. For example, we can model an actuated multi-rigid body \cite{Slotine1991, lewis2003robot} by the Euler-Lagrange equation, then generate $u_{ss}$ for varying mass, size, the inertia of each rigid body, and some nonparametric uncertainties, such as different friction models \cite{Slotine1991} (see Section \ref{section: Simulation Results} for more details); 
2) \emph{Steady-state policy distillation}: according to \eqref{eq3.10} and \eqref{eq3.11}, every controller that can lead the tracking error to zero has the same result for $u_{ss}$. Therefore, we can use a regular base controller that can lead the tracking error to zero for generating $u_{ss}$ for different tasks and uncertainties;
3) \emph{Experimental data in a look-up table}: we can also use the system's empirical data in the steady-state for different tasks.

The desired trajectory in most applications is usually periodic or point-to-point. Also, based on Assumption \ref{Assumption_desired}, a Fourier series (FS) exists for the desired trajectory. 
According to the properties of the dynamic system \eqref{eq2.1}, if the desired trajectory is bounded and periodic, then $u_{ss}$ will also be bounded and periodic. Furthermore, if the fundamental period and frequency of the desired trajectory are $T$ and $\omega$, respectively, then $T$ and $\omega$ can be period and frequency for $u_{ss}$.
Considering the challenges of dealing with time-series signals in the learning tasks, we can use the truncated Fourier series (TFS) coefficients as a feature vector for time-series signals $x_d(t)$ and $u_{ss}(t)$.
We consider a limited number of FS coefficients because high-frequency harmonics are small and negligible. The number of the considered coefficients can be determined by various factors such as the computational power, complexity of the desired trajectory, and the degree of the system's nonlinearity; however, CGAN can handle high-dimensional data. 
Thus, real-valued signals $x_d(t)$ and $u_{ss}(t)$ can be represented by a TFS as follows:
\begin{align}\label{eq3.13}
x_d(t) &= C_{x_d}\Phi^{n_x}_\omega(t)+\delta_{1}({x_d}) \nonumber \\
u_{ss}(t) &= C_{u_{ss}}\Phi^{n_u}_\omega(t)+\delta_{2}({u_{ss}}),
\end{align}
where $C_{x_d}\in \mathbb{R}^{n\times (2n_x+1)}$ and $C_{u_{ss}}\in \mathbb{R}^{n\times (2n_u+1)}$ are TFS coefficients, and $\delta_{1}({x_d})\in \mathbb{R}^{n}$ and $\delta_{2}({u_{ss}})\in \mathbb{R}^{n}$ are the truncation error, for $x_d(t)$ and $u_{ss}(t)$, respectively. $\Phi^{n_{x/u}}_\omega(t)\in \mathbb{R}^{2n_{x/u}+1}$ is sine kernel of the TFS for $x_d(t)$ or $u_{ss}(t)$ as 
\begin{align}\label{eq_Kernel}
\Phi^{n_{x/u}}_\omega(t) = [0.5,&\cos(\omega t),\cos(2\omega t),...,\cos(n_{x/u}\omega t),\nonumber \\
&\sin(\omega t),\sin(2\omega t),...,\sin(n_{x/u}\omega t)]^T.
\end{align}

\begin{figure*}[!h]
  \includegraphics[width=\textwidth]{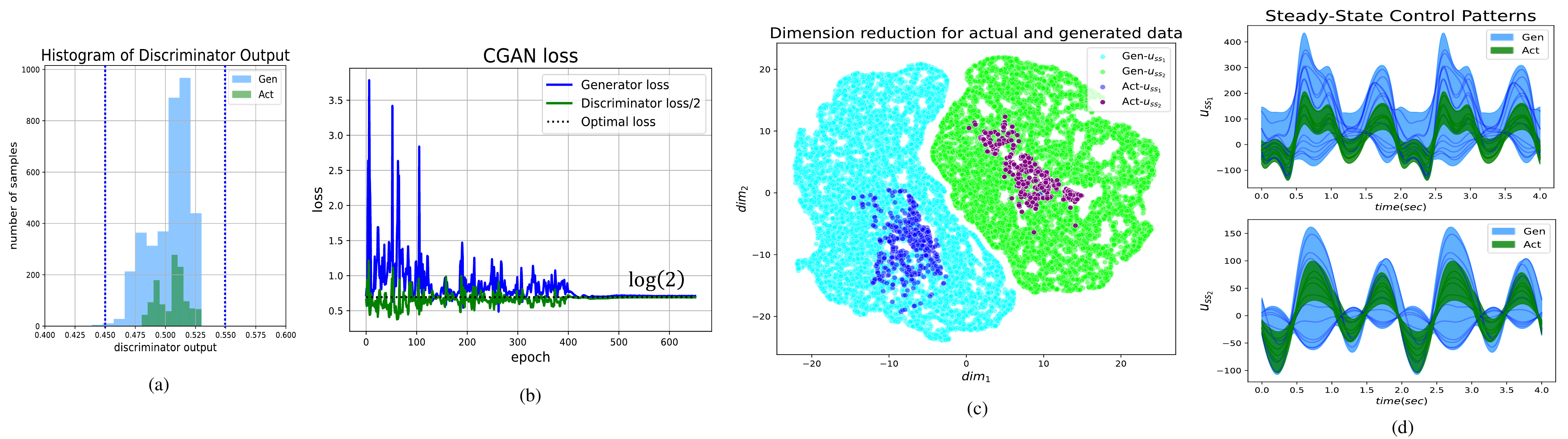}
  \caption{Simulation results of CGAN for the data of a 2-DoF planar manipulator (see Section \ref{section: Exm1} for more details): (a) discriminator output for the actual data (green shade) from the data set and the generated data (blue shade) for a specific label that they are around $\frac{1}{2}$, (b) the loss function value for the generator and the discriminator during training by which they approach their optimal value $\log(2)$, (c) dimension reduction for the actual data distribution (blue for $u_{ss_1}$ and purple for $u_{ss_2}$) from the data set and generated data distribution (cyan for $u_{ss_1}$ and green for  $u_{ss_2}$) for a specific label using the t-distributed stochastic neighbor embedding (t-SNE) method \cite{Maaten2008} to verify the CGAN capability of modeling and generalization, (d) time-series patterns for the actual data (green shade) from the data set and generated data (blue shade) for a specific label.}
\label{fig:F_CGAN}
\end{figure*}

\begin{figure}[!h]
  \centering
  \includegraphics[width=0.8\columnwidth]{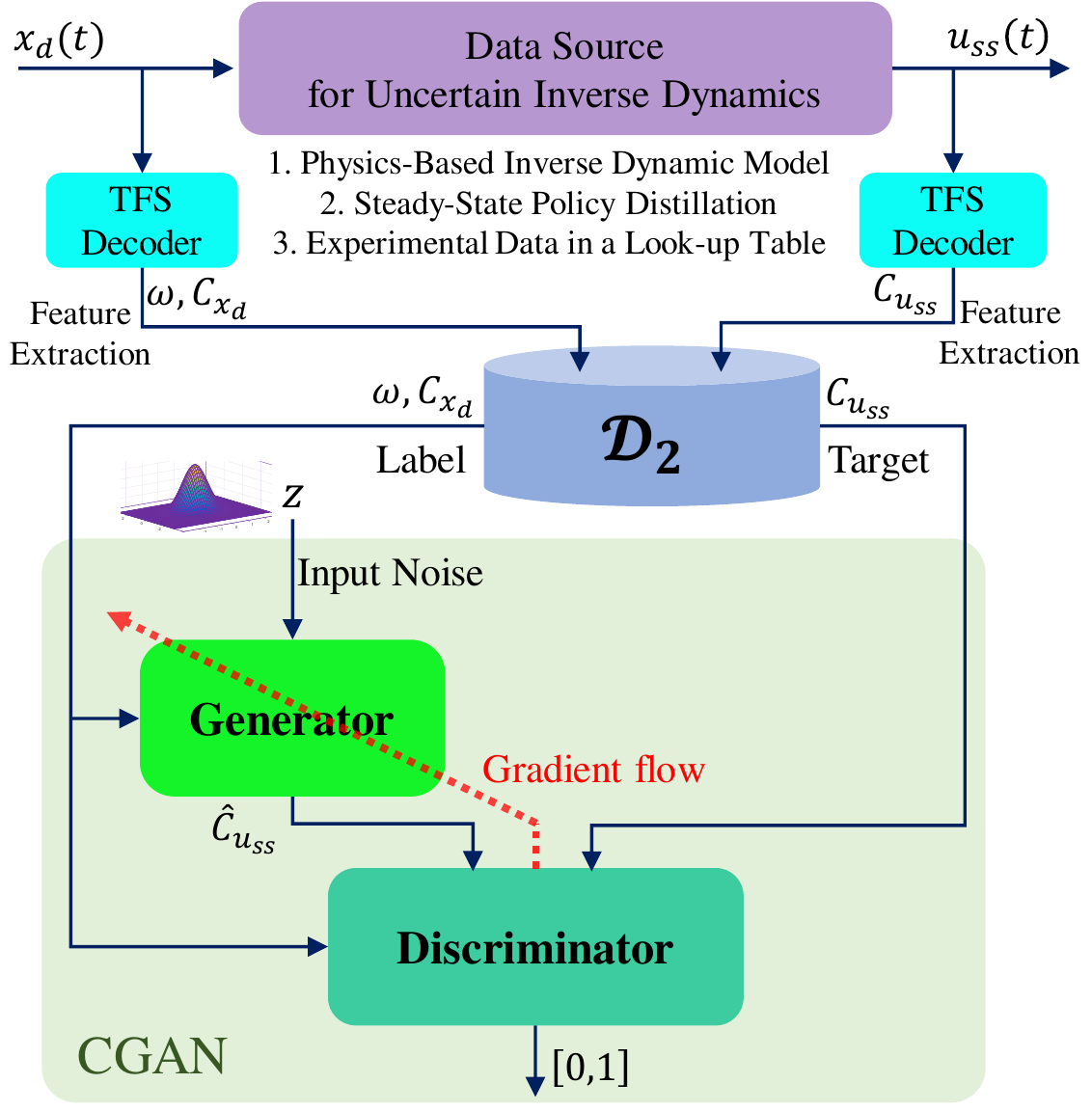}
  \caption{The schematic representation for the training of the CGAN.}\label{fig:F2}
\end{figure}

Therefore, we can use either offline FS transformation \cite{Oppenheim1997} with a moving fixed-length window or the online dynamical canonical system described by \cite{gams2009line} to decode the periodic time-series signals in $\mathscr{D}_1$ into TFS coefficients. Then, we form the data set $\mathscr{D}_2$ and leverage the algorithms given by \cite{Mirza2014} for training the CGAN (Fig. \ref{fig:F2}) as
\begin{align}\label{eq3.15}
\mathscr{D}_2 = \{label: (C^i_{x_d},\omega^i),\:   target: C^i_{u_{ss}}| i=1,...,N\}.
\end{align}

The algorithm's convergence and global optimality have been proven by \cite{Goodfellow2014}. We exhibit the simulation results of training CGAN for a 2-DoF manipulator (see Section \ref{section: Exm1} for more details) in Fig. \ref{fig:F_CGAN}. The global optimality convergence is demonstrated in Fig. \ref{fig:F_CGAN} (a) and (b), and we illustrate the generalization capability of the CGAN in Fig. \ref{fig:F_CGAN} (c) and (d).

Regarding Fig. \ref{fig:F_CGAN} (c) and (d), if the system is in a rather different situation from the training data set, the conditional generator can produce data for this situation owing to the great generalization of the CGAN.
Also, the conditional generator represents a latent space in its input noise space for the uncertainties of the data and dynamic system. It means the conditional generator can generate variant targets corresponding to variant input noise values for a specific desired trajectory (Fig. \ref{fig:F_CGAN} (d)). Hence, to match the uncertainties in the unknown dynamic system, we must adjust the input noise value. As such, two adaptation mechanisms will be presented in the following subsections.
The CGAN trained in the way aforementioned can be exploited for the following applications: 

1) \emph{Steady-state control policy}: 
The conditional generator extracts the inverse dynamic model for steady-state control policy. By treating the input noise $z$ as the adjustable parameters for the steady-state controller, the conditional generator will be a dynamic policy in the Meta-RL framework, as shown in Fig. \ref{fig:F_Generator_Fault} (a). Therefore, the estimation of the steady-state controller ($\hat{u}_{ss}$) is as follows:
\begin{align}\label{eq16}
\hat{u}_{ss}(\bar{x}_d) \triangleq \hat{C}_{u_{ss}}\Phi^{n_u}_\omega(t)=Gen(z(t)|C_{x_d},\omega)\Phi^{n_u}_\omega(t),
\end{align}
where $Gen(.|.)$ is the conditional generator that estimates the TFS coefficients of the steady-state control term. Then, the generated steady-state control $\hat{u}_{ss}$ can be obtained by multiplying generated TFS coefficients by the sine kernel (Fig. \ref{fig:F_Generator_Fault} (a)).

2) \emph{Data-driven describing function}:
From one point of view, the conditional generator is a data-driven extended describing function \cite{Slotine1991} such that it expresses the relationship between the amplitude of the input harmonics and the amplitude of the output harmonics for an uncertain nonlinear system.

3) \emph{Data-driven adaptable CPG}: 
CPGs are biological neural circuits in the neural system of mammalians and invertebrates \cite{Ijspeert2008}. 
CPGs can generate coordinated high-dimensional rhythmic control patterns while receiving low-dimensional input signals from higher-level control centers of the neural system.
The higher-level centers modulate the patterns based on environmental circumstances and the desired frequency \cite{Ijspeert2008, Yu2014}.
In recent years, researchers have been using CPG models for the bio-inspired control of robots with repetitive movements and activities \cite{Yu2014}.  
In this paper, the conditional generator produces TFS coefficients of the rhythmic control patterns while receiving trajectory specifications and low-dimensional input noise to modulate the control patterns according to the system's uncertainties. Therefore, the conditional generator (Fig. \ref{fig:F_Generator_Fault} (a)) is an adaptable CPG model that can be used in the robot control field. 
Previous researchers have suggested some nonlinear dynamic systems as a CPG model that are not flexible enough to adapt to the robot and the environment \cite{Yu2014}. Moreover, adaptive and optimal tuning of the parameters of the previous CPG models is highly challenging. However, we train the conditional generator by the data of the system. It is familiar with the system's behavior and represents a low-dimensional latent space that facilitates adaptation.

4) \emph{Data-driven fault detector}: 
Reference 
\cite{Goodfellow2014} proved after training that the discriminator's optimal value is $\frac{1}{2}$ for all samples from the training set and those generated (Fig. \ref{fig:F_CGAN} (a)). 
In other words, the discriminator does not return $\frac{1}{2}$ for the data that do not have the same distribution as the training set. 
According to Fig. \ref{fig:F_Generator_Fault} (b), we can use the conditional discriminator in a closed-loop robust control system as a fault detector that detects unexpected conditions for the dynamic system.
It takes the TFS coefficients of the control input and returns a real number between zero and one. If the output of the discriminator deviates from $\frac{1}{2}$ in the steady-state, there is a fault in the closed-loop system that means the system does not work similarly to the training data set conditions.
Note that the discriminator can detect the only presence of the fault in the steady-state. It cannot determine the specific sort of fault and its nature. 

Fig. \ref{fig:F_Fault_Curve} shows the fault detection simulation result for a 2-DoF manipulator (see Section \ref{section: Exm1} for more details) based on Fig. \ref{fig:F_Generator_Fault} (b). 
We used the RISE controller \cite{Xian2004} as a robust controller in the structure of Fig. \ref{fig:F_Generator_Fault} (b). We considered 50 percent parametric uncertainty for the parameters' value of the 2-DoF manipulator in the training data set for ideal conditions.
According to Fig. \ref{fig:F_CGAN} (a), we can suppose the interval $[0.45,0.55]$ that is symmetric interval around $\frac{1}{2}$ as an ideal region for the manipulator. 

\begin{figure}[!h]
  \centering
  \includegraphics[width=1\columnwidth]{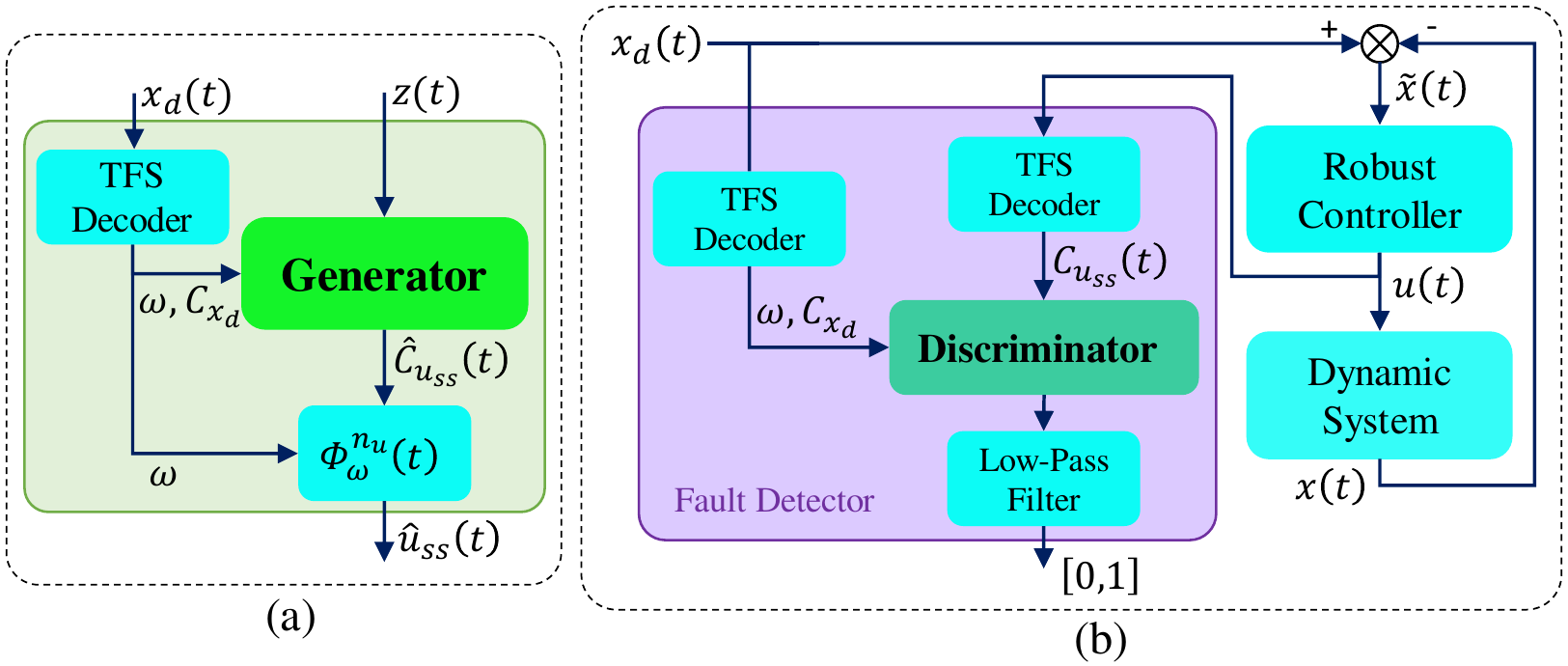}
  \caption{CGAN applications after training: (a) using the conditional generator as a dynamic policy for steady-state control, (b) using the conditional discriminator as a fault detector.}\label{fig:F_Generator_Fault}
\end{figure}

\begin{figure}[!h]
  \centering
  \includegraphics[width=1\columnwidth]{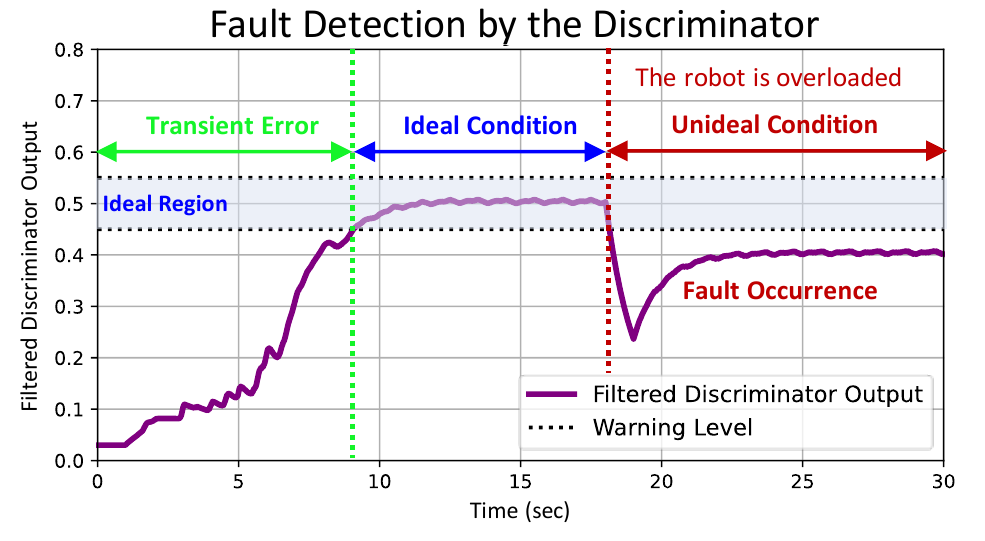}
  \caption{Filtered discriminator output for a 2-DoF manipulator in the closed-loop control. Initially, there is a transient state error until the system reaches its ideal condition. After the 18th second, the robot is overloaded by 60 percent of the second link's mass and moves away from its ideal condition. Therefore, a fault occurs, and the discriminator detects it.}\label{fig:F_Fault_Curve}
\end{figure}

%%%%%%%%%%%%%%%%%%%%%%%%%%%%%%%%%%%%%%%%%%%%%%%%%%%%%%%%%%%%%%%%%%%%%%%%%%%%%%%%%%%%%%%%%%%%%%%%%%%%%%

\subsection{RISE: Regulation Control Policy}

The RISE controller
\cite{Xian2004} is a robust control that yields asymptotic tracking stability for tracking tasks. RISE uses the integral of the sign of error term to compensate for the uncertain system dynamics.
Based on the work of \cite{Yang2015} and \cite{Xian2004}, the RISE control law is as follows:
\begin{align}\label{eq17}
u_{reg} \triangleq - (k+1) e(t) - \int^t_0 \left[\alpha (k+1) e(\tau)+\beta sgn(e(\tau))\right] d\tau,
\end{align}
where $\alpha$, $\beta$, and $k$ are positive control gains that must be chosen according to Section \ref{section: Stability Analysis} to make the closed-loop system stable and robust, $sgn(.)$ expresses the standard sign function, and $e(t)$ is a filtered tracking error as follows:
\begin{align}\label{eq18}
e(t) &\triangleq \lambda_{m-1} \tilde{x}^{(m-1)}(t) +\lambda_{m-2} \tilde{x}^{(m-2)}(t)+...+\lambda_{0}\tilde{x}(t)\nonumber \\
&= \sum^{m-1}_{i=0} \lambda_{i} \tilde{x}^{(i)}(t),
\end{align}
where $\lambda_{0}, ..., \lambda_{m-1}$ are suitable Hurwitz constants. Note that $e(t)$ depends only on the extended state vector $X(t)$ in \eqref{eq3.9}; hence, it is measurable. 

%%%%%%%%%%%%%%%%%%%%%%%%%%%%%%%%%%%%%%%%%%%%%%%%%%%%%%%%%%%%%%%%%%%%%%%%%%%%%%%%%%%%%%%%%%%%%%%%%%%%%%

\subsection{Adaptation Mechanisms}
As mentioned previously, by adjusting the low-dimensional input noise of the conditional generator, we can achieve the appropriate patterns for the dynamic system. In comparison, conventional adaptive control methods explore uncertainties in the original high-dimensional space \cite{Slotine1991, chen2015, Astrom1994}.

Two approaches can be used to tune the input noise: indirect and direct. In the indirect approach, we first estimate uncertainties by a system identifier that knows the structure of the dynamic model. Then, the estimated uncertainties map to the input noise. We consider the conditional generator as a dynamic policy in the Meta-RL framework in the direct approach, where an RL strategy will adjust the input noise without any knowledge of the system.

%%%%%%%%%%%%%%%%%%%%%%%%%%%%%%%%%%%%%%%%%%%%%%%%%%%%%%%%%%%%%%%%%%%%%%%%%%%%%%%%%%%%%%%%%%%%%%%%%%%%%%

\subsubsection{\textbf {Direct Approach} (Novel Meta-RL Structure)}

In the RL framework, a deterministic policy function gets the state of the dynamic system (environment) and returns control action for manipulation. An immediate reward function concerning the state and action gives a real scalar number as the immediate performance of the policy. Value functions are introduced to evaluate the degree to which a policy is optimal in a given state or a given state--action pair. The goal is to find a policy function that optimizes the value functions \cite{Sutton2017}. Meanwhile, in the Meta-RL framework, the policy is dynamic, and there is a meta-training loop to adjust the parameters of the dynamic policy according to variations in the dynamic system \cite{Finn2017, Rakelly2019, Nagabandi2019}. In other words, the goal is to find an adaptation rule that tunes the dynamic policy to optimize the value functions.

This paper introduces a novel Meta-RL structure that leverages the conditional generator as a dynamic parameterized policy. The conditional generator provides a set of policies that have already been learned for different situations of the uncertain system. Hence, it can reduce the search space to a low-dimensional latent space compared to the traditional policy gradient methods that search the whole parameter space of a general parameterized policy. Therefore, we use a feedforward NN as an adaptor network that updates the input noise according to the interactions with the system (Fig. \ref{fig:F_Dir}). Fig. \ref{fig:F_Generator_Fault} (a) shows the structure employed as the actor, and a feedforward NN is used as the critic to estimate the state--action value function. 

We define $X(t)$ in \eqref{eq3.9} as the state vector, $u(t)$ as the action vector, and the immediate reward signal $r(t)$ according to cost function $J(.)$ for the RL framework, as follows:
\begin{align}
r(t) &\triangleq \exp\left(-J\left(X(t),u(t)\right)\right)\in \mathbb{R}^+,\label{eq3.19} \\ 
J(X(t),u(t))&\triangleq Q||\tilde{x}(t)||_1+R||u(t)||_1 \in \mathbb{R}^+,\label{eq3.192}
\end{align}
where $Q$ and $R$ are positive diagonal matrices that sufficiently normalize control effort and tracking error terms, and $||.||_1$ stands for the $L1$ norm. The state--action value function $V(.)$ is defined as 
\begin{align}\label{eq3.199}
V(X(t),u(t)) = \int_t^\infty e^{-\gamma (\tau-t)} r(\tau) d\tau,
\end{align}
where $\gamma$ is a positive attenuation rate. We use the continuous temporal difference (TD) method according to \cite{LEE2021} to update the weights of the critic network as follows:
\begin{align}\label{eq3.20}
\delta_{TD}(t) &\triangleq r(t) + \dot{\hat{V}}_v(X(t),u(t)) -\gamma \hat{V}_v(X(t),u(t))\nonumber\\
\dot{v}(t) &= \eta_v \delta_{TD}(t) \frac{\partial \hat{V}_v(X(t),u(t))}{\partial v},
\end{align}
where $\delta_{TD}$ is temporal difference error, $\hat{V}_v(.)$ is the estimate of the state--action value function with weights $v$, and $\eta_v>0$ is the learning rate of the critic network. The adaptor network $\psi_w(.)$ with weights $w$ transforms the system state vector to update the input noise of the conditional generator such that 
\begin{align}\label{eq3.21}
\dot{z}(t) &= \psi_w (X(t)),
\end{align}
where $z(t)$ is a functional of the $w$ and $X$.
Based on \eqref{eq3.11}, \eqref{eq16}, and \eqref{eq3.21}, we obtain the gradient of the policy w.r.t. $w$ by using the chain rule as follows:
\begin{align}\label{eq3.22}
\frac{\partial u}{\partial w} &= \frac{\partial \hat{u}_{ss}}{\partial w}=\frac{\partial \hat{u}_{ss}}{\partial \hat{C}_{u_{ss}}} . \frac{\partial \hat{C}_{u_{ss}}}{\partial z} . \frac{\partial z}{\partial w} \nonumber \\
&= \bar\Phi^{n_u}_\omega.\frac{\partial Gen(z|C_{x_d},\omega)}{\partial z}. \int_0^t \frac{\partial \psi_w (X)}{\partial w} d\tau,
\end{align}
where $\bar\Phi^{n_u}_\omega$ is a tensor related to the elements of the vector $\Phi^{n_u}_\omega$; $\partial Gen(z|C_{x_d},\omega)/\partial z$ indicates the Jacobian of the conditional generator w.r.t. $z$; and $\partial \psi_w (X)/\partial w$ is the gradient of the adaptor network w.r.t. $w$. Therefore, the weights of the adaptor network can be updated using the DDPG method \cite{Silver2014 ,Lillicrap2016a}, and \eqref{eq3.22} as follows:
\begin{align}\label{eq3.23}
\dot{w}(t) &= \eta_w  \frac{\partial \hat{V}_v(X,u)}{\partial u(t)}. \frac{\partial u}{\partial w}
\nonumber\\ &=\eta_w  \frac{\partial \hat{V}_v(X,u)}{\partial u}.\bar\Phi^{n_u}_\omega.\frac{\partial Gen(z|C_{x_d},\omega)}{\partial z}.\int_0^t \frac{\partial \psi_w (X)}{\partial w} d\tau,
\end{align}
where $\partial \hat{V}_v(X,u)/\partial u$ indicates the Jacobian of the critic, and $\eta_w>0$ is the learning rate of the adaptor network. While the adaptor network is updating, we fix the conditional generator.  

We implement the update rules \eqref{eq3.20} and \eqref{eq3.23} in an online manner (Fig. \ref{fig:F_Dir}). They can also be executed by using mini-batch data prioritized by time and collected in a reply buffer (RB), as shown in Fig. \ref{fig:F_Dir}. Experience replay buffer technique improves the data efficiency and avoids local minimum \cite{Chowdhary2010, Adam2012,ZHANG202040}. 
\begin{figure}[!h]
  \centering
  \includegraphics[width=0.9\columnwidth]{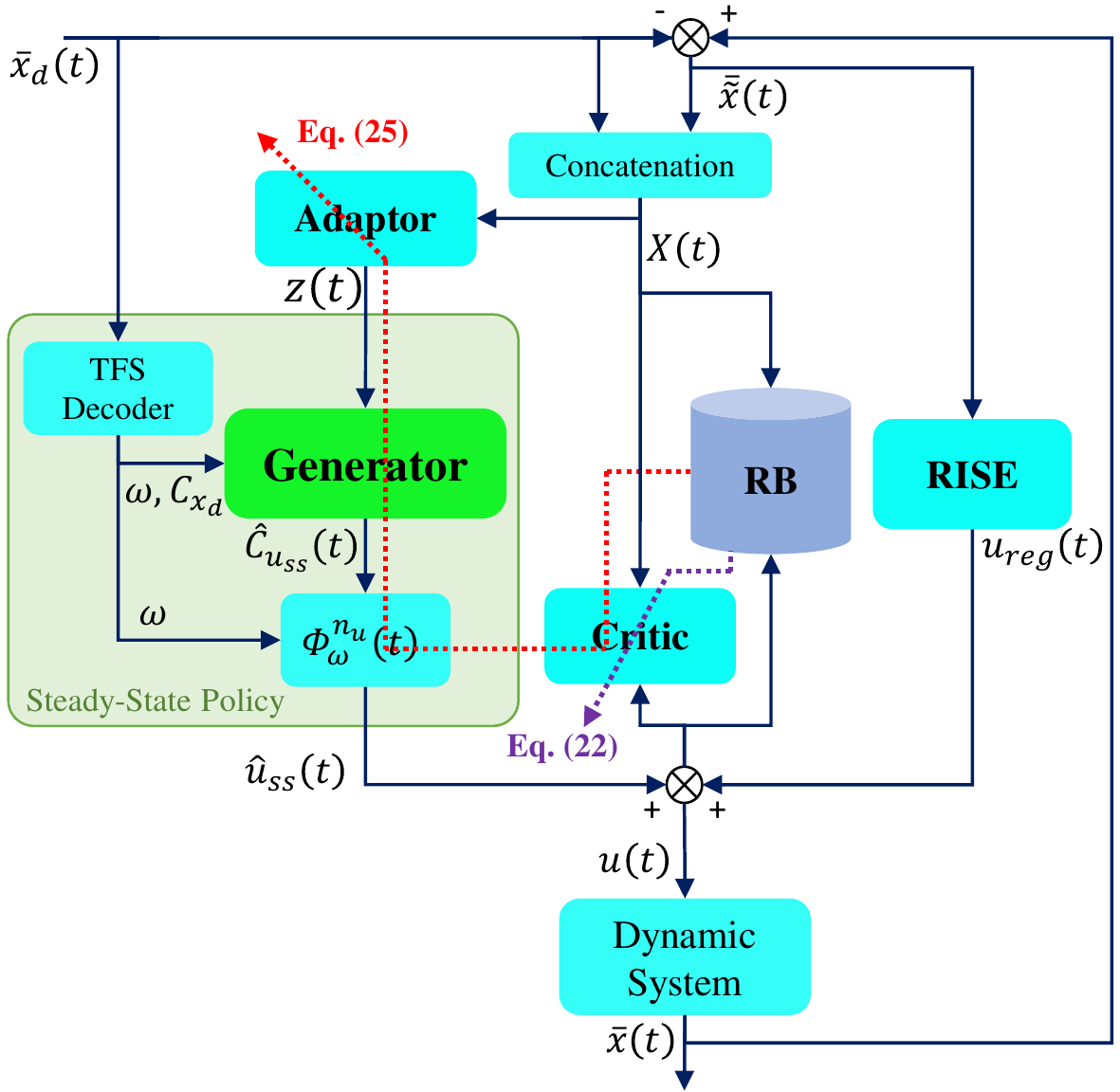}
  \caption{The direct adaptation mechanism for fast and optimal adaptive tracking control scheme: RISE controller and the conditional generator are used for regulation and the steady-state policy, respectively. The novel Meta-RL structure adjusts the steady-state policy.}\label{fig:F_Dir}
\end{figure}

We exploit the RL framework to approximately optimally update the weights of the adaptor network such that the value function in \eqref{eq3.199} is maximized. Therefore, we introduced a Meta-RL structure incorporated into the RISE controller that yields a fast and optimal adaptive control scheme. As mentioned previously, we can use the structure in Fig. \ref{fig:F_Generator_Fault} (a) as a data-driven adaptable CPG. Therefore, we can employ the structure in Fig. \ref{fig:F_Dir} to adjust a versatile CPG-based control.

%%%%%%%%%%%%%%%%%%%%%%%%%%%%%%%%%%%%%%%%%%%%%%%%%%%%%%%%%%%%%%%%%%%%%%%%%%%%%%%%%%%%%%%%%%%%%%%%%%%%%%
\subsubsection{\textbf {Indirect Approach} (EKF as a System Identifier)}

EKF is one of the most applicable model-based identifiers for uncertain nonlinear systems.
We use EKF as a system identifier to simultaneously estimate parametric uncertainties and filtered state vector of the dynamic system based on \cite{Boutayeb1997}. The formulation and settings described by \cite{Boutayeb1997} are employed for better convergence.

This subsection presents a novel indirect adaptive control using EKF and the conditional generator. 
EKF can estimate uncertainties as a feature vector of the uncertain dynamic system. Therefore, to adapt the steady-state policy to the uncertain system, an interface is required to decode the uncertainty estimated by EKF to the latent space prepared by the input noise of the conditional generator.

We employ a feedforward NN with weights $w$ as an adaptor ($\psi_w(.)$) to decode the high-dimensional estimated uncertainties into the low-dimensional input noise such that 
\begin{align}\label{eq3.244}
z(t) &= \psi_w (\hat{\theta}(t)),
\end{align}
where $\hat{\theta}(t)$ is the estimated parametric uncertainty vector by EKF at time $t$. 
we apply supervised learning to update the weights of the adaptor. To do so, we transform the data set $\mathscr{D}_1$ into the following data set $\mathscr{D}_3$ by using Fourier series transformation and EKF: 
\begin{align}\label{eq3.25}
\mathscr{D}_3 = \{label: (\hat{\theta}^i, C^i_{x_d}, \omega^i),\:   target: C^i_{u_{ss}}| i=1,...,N\},
\end{align}
where $\hat{\theta}^i$ is the estimated uncertainty vector according to $i^{\text{th}}$ sample. Then, we consider the following loss function:
\begin{align}\label{eq3.26}
Loss(u_{ss},\hat{u}_{ss})&=\frac{1}{T} \int_T\| u_{ss}(t) - \hat{u}_{ss}(t) \|^2  dt,
\end{align}
By substituting \eqref{eq3.13} and \eqref{eq16} into \eqref{eq3.26} and with manipulations, we obtain
\begin{align}\label{eq3.27}
Loss(u_{ss},\hat{u}_{ss})&=trace((C_{u_{ss}}-\hat{C}_{u_{ss}}) P  (C^{T}_{u_{ss}}-\hat{C}^{T}_{u_{ss}}))\nonumber \\ &+||\delta_2(u_{ss})||_2^2.
\end{align}
where the first term is a positive quadratic function, the second term denotes the fixed TFS truncation error in \eqref{eq3.13}, and $P$ is the diagonal positive definite matrix 
\begin{align}
P = diag([1,0.5,0.5,...,0.5]) \in\mathbb{R}^{(2n_u+1) \times (2n_u+1)}.
\end{align}
By using \eqref{eq16}, \eqref{eq3.244}, and the chain rule, the gradient of the loss function w.r.t. the weights of the adaptor network is obtained according to Fig. \ref{fig:F_Adp} as follows:
\begin{align}\label{eq3.28}
\frac{\partial Loss(u_{ss},\hat{u}_{ss})}{\partial w}=\frac{\partial Loss(u_{ss},\hat{u}_{ss})}{\partial \hat{C}_{u_{ss}}} . \frac{\partial \hat{C}_{u_{ss}}}{\partial z} . \frac{\partial z}{\partial w}.
\end{align}
We implement the Adam optimization method \cite{kingma2017adam} to train the adaptor network using \eqref{eq3.28} while the conditional generator is fixed. 
\begin{figure}[!h]
  \centering
  \includegraphics[width=0.50\columnwidth]{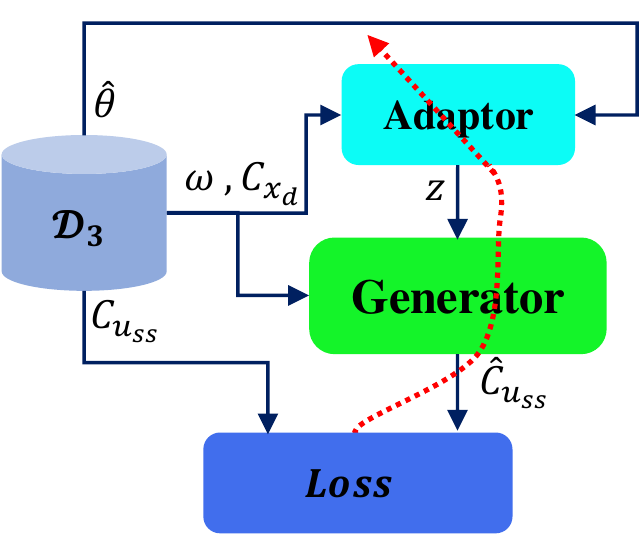}
  \caption{Supervised learning for the adaptor in the indirect adaptation mechanism.}\label{fig:F_Adp}
\end{figure}
Finally, Fig. \ref{fig:F_Ind} shows an indirect adaptive control after the adaptor and conditional generator network training. 
The known dynamic model structure requirement for EKF is a limitation of this approach. 
\begin{figure}[!h]
  \centering
  \includegraphics[width=0.81\columnwidth]{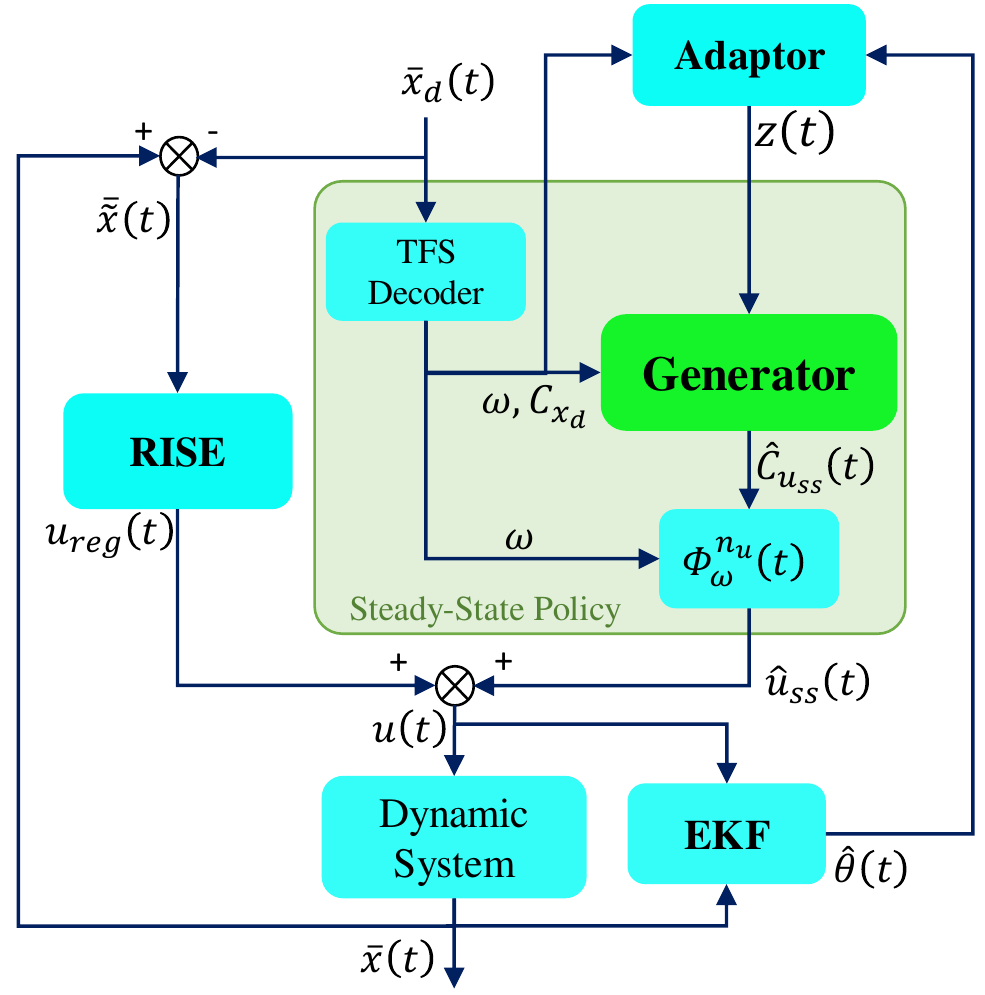}
 \caption{The indirect adaptation mechanism for indirect adaptive tracking control scheme: RISE controller and the conditional generator are used for regulation and the steady-state policy, respectively. The learned adaptor and EKF adjust the steady-state policy.}\label{fig:F_Ind}
\end{figure}
%%%%%%%%%%%%%%%%%%%%%%%%%%%%%%%%%%%%%%%%%%%%%%%%%%%%%%%%%%%%%%%%%%%%%%%%%%%%%%%%%%%%%%%%%%%%%%%%%%%%%%

\section{Stability Analysis}  \label{section: Stability Analysis}
In this section, we present a novel stability analysis for the direct adaptive approach as a fast and optimal adaptive tracking control scheme by exploiting the stability analysis approaches described by \cite{Yang2015,Xian2004,Chowdhary2010, SOKOLOV2015, ZHANG202040}.
Equations \eqref{eq2.1}, \eqref{eq3.11}, \eqref{eq16}, \eqref{eq17}, \eqref{eq18}, \eqref{eq3.20}, \eqref{eq3.21}, and \eqref{eq3.23} are the governing equations in this control system.

%%%%%%%%%%%%%%%%%%%%%%%%%%%%%%%%%%%%%%%%%%%%%%%%%%%

\subsection{Closed-Loop Extended Error Dynamics}
By recalling Assumption \ref{Assumption_G}, we can rewrite the dynamic system \eqref{eq2.1} as 
\begin{align}\label{eq4.1}
A(\bar{x})x^{(m)}(t)-B(\bar{x})-D(t)=u(t)
\end{align}
where $A(\bar{x})\triangleq g^{-1}(\bar{x})$, $B(\bar{x}) \triangleq g^{-1}(\bar{x})f(\bar{x})$, and $D(t) \triangleq g^{-1}(\bar{x}) d(t)$. Define error manifold $s(t)$ as follows:
\begin{align}\label{eq4.2}
s(t) \triangleq \dot{e}(t)+\alpha e(t),
\end{align}
where $e(t)$ and $\alpha$ are introduced in \eqref{eq18}. By differentiating \eqref{eq4.1} and \eqref{eq4.2} w.r.t. time and using \eqref{eq18}, we obtain 
\begin{align}\label{eq4.3}
A(\bar{x})\dot{s}(t) &= A(\bar{x}) \left( -x_d^{(m+1)}(t) +\lambda_{m-2} \tilde{x}^{(m)}(t)\right. \nonumber \\
&\left. +\lambda_{m-3} \tilde{x}^{(m-1)}(t)+...+\lambda_{0}\tilde{x}^{(2)}(t)+\alpha e(t) \right)  \nonumber \\
&- \dot{A}(\bar{x})x^{(m)}(t) +\dot{B}(\bar{x}) + \dot{D}(t)+\dot{u}(t),
\end{align}
where $\lambda_{m-1}=1$ without loss of generality. 

\begin{assumption} \label{Assumption_ErrorUss}
The estimation error between the estimated steady-state control in \eqref{eq16} and the exact value of the steady-state control in \eqref{eq3.8} $(\delta_3(t)\triangleq \hat{u}_{ss}(t)-u_{ss} (t))$, and its derivatives are norm-bounded as
\begin{align}\label{eq4.3333}
\|\delta_3(t)\|_1<\bar{\delta}_3, \quad \|\dot{\delta}_3(t)\|_1<\dot{\bar{\delta}}_3, \quad \|\ddot{\delta}_3(t)\|_1<\ddot{\bar{\delta}}_3,
\end{align}
where $\bar{\delta}_3$, $\dot{\bar{\delta}}_3$, and $\ddot{\bar{\delta}}_3$ are positive constants. $\delta_3(t)$ can originate by the estimation error of the conditional generator, adaptor error, and the TFS truncation error. 
\end{assumption}

By differentiating $u_{reg}$ in \eqref{eq17}, and based on \eqref{eq3.11}, \eqref{eq4.2}, and Assumption \ref{Assumption_ErrorUss}, we can write the derivative of the control input as follows:
\begin{align}\label{eq4.343}
\dot{u} = \dot{u}_{ss}(t) + \dot{\delta}_3(t) - (k+1) s(t)-\beta sgn(e(t)),
\end{align}
Let us consider the auxiliary function $N(\bar{x},\bar{x}_d)$ as follows:
\begin{align}\label{eq4.4}
N(\bar{x},\bar{x}_d) &\triangleq -A(\bar{x}) \left( -x_d^{(m+1)}(t) +\lambda_{m-2} \tilde{x}^{(m)}(t)\right. \nonumber \\
&\left. +\lambda_{m-3} \tilde{x}^{(m-1)}(t)+...+\lambda_{0}\tilde{x}^{(2)}(t)+\alpha e(t) \right)  \nonumber \\
&+ \dot{A}(\bar{x})x^{(m)}(t) -\dot{B}(\bar{x}) - \frac{1}{2} \dot{A}(\bar{x})s(t) - e(t),
\end{align}
so that $N_d(t) \triangleq N(\bar{x}_d,\bar{x}_d)$ is equal to the derivative of the steady-state control in \eqref{eq3.8}, given by 
\begin{align}\label{eq4.5}
N_d(t) = A(\bar{x}_d)x_d^{(m+1)}(t) + \dot{A}(\bar{x}_d)x_d^{(m)}(t) -\dot{B}(\bar{x}_d) = \dot{u}_{ss}(t),
\end{align}
After substituting \eqref{eq4.343} and \eqref{eq4.4} into \eqref{eq4.3} and adding and subtracting $N_d(t)$ to the right-hand side of  the resulting equation, we obtain the following closed-loop extended tracking error dynamics:
\begin{align}\label{eq4.6}
A(\bar{x})\dot{s}(t) &= \tilde{N}(\bar{x},\bar{x}_d) - \frac{1}{2} \dot{A}(\bar{x})s(t) - e(t) \nonumber \\
&- (k+1) s(t)-\beta sgn(e(t)) + \dot{D}(t)+\dot{\delta}_3(t),
\end{align}
where $\tilde{N} \triangleq N_d-N$.

\begin{remark}
Because $N(\bar{x},\bar{x}_d)$ is continuously differentiable, in Appendix of \cite{Queiroz1997}, the following inequality is proved by the mean value theorem:
\begin{align}\label{eq4.7}
\| \tilde{N}\| \leq \rho (\|\xi\|) \|\xi\|,
\end{align}
where $\xi(t)\triangleq \left[s^T(t), \: e^T(t) \right]^T\in\mathbb{R}^{2n} $ is the extended tracking error vector, and $\rho (.)$ represents a non-negative, non-decreasing invertible function.
\end{remark}

%%%%%%%%%%%%%%%%%%%%%%%%%%%%%%%%%%%%%%%%%%%%%%%%%%%

\subsection{Critic and Adaptor Structure}
To simplify the convergence analysis, we consider the following structure for the critic and adaptor network, respectively. 
\begin{align}
\hat{V}_v\left(X(t),u(t)\right) &\triangleq v^T \sigma \left(\Upsilon(t) \right), \label{eq4.81}\\
\dot{z}_j (t) = \psi_{w_j}(X(t)) &\triangleq w_j^T \dot{e}(t) \label{eq4.82},
\end{align}
where $\Upsilon(t)=\Gamma \left[X^T(t),u^T(t)\right]^T$; $v\in \mathbb{R}^{l_c}$ and $\Gamma \in \mathbb{R}^{l_c \times (2nm+n)}$ are the weights of the output layer and hidden layer for the critic network, respectively.
We choose hidden layer weight $\Gamma$ randomly by the Glorot initialization method \cite{pmlr-v9-glorot10a} and keep it unchanged.
$z_j$ and $\psi_{w_j}$ for $j=1,...,l_z$ denote the $j^{\text{th}}$ element of the vector $z$ and $\psi_{w}$, respectively; and $l_z$ signifies the dimension of the latent variable $z$. 
$w_j \in \mathbb{R}^{n}$ for $j=1,...,l_z$ are the weights of the adaptor network.
Let $\sigma(.)$ be the standard $\tanh(.)$ function that indicates the activation function of the critic network.

We can rewrite the learning rule \eqref{eq3.20} by using \eqref{eq4.81} and recorded mini-batch data as follows:
\begin{align}\label{eq4.10}
\dot{v}(t) &= \eta_v a(t) \left(r(t)+ (\dot{a}^T(t)- \gamma a^T(t))v(t) \right)\nonumber \\ 
&+\sum^p_{i=0} \eta_v a(t_i) \left(r(t_i)+ (\dot{a}^T(t_i)- \gamma a^T(t_i))v(t_i) \right),
\end{align}
where $a(t)\triangleq \sigma(\Upsilon(t))$, $\dot{a}(t)\triangleq \dot{\sigma}(\Upsilon(t))$, $\|a(t)\|\le 1$, and $\|\dot{a}(t)\| \le \dot{\bar{a}}$. Also, we consider the following verifiable condition on online recorded data similar to \cite{Chowdhary2010} to ensure convergence.
\begin{align}\label{eq4.11}
&rank\left([a(t_1),a(t_2),... ,a(t_p)]\right)=l_c.
\end{align}
\begin{remark} \label{Remark3}
Based on \eqref{eq4.11}, matrix $\Pi \triangleq \sum_{i=0}^p a(t_i)a^T(t_i)$ is positive definite.
\end{remark}
Further, we can rewrite the update rule \eqref{eq3.23} based on \eqref{eq4.82} as
\begin{align}\label{eq4.12}
\dot{w}_j(t)&= v^T(t) \Lambda_j(t) e(t),
\end{align}
where $\Lambda_j \in \mathbb{R}^{l_c}$ is given by
\begin{align}\label{eq4.13}
\Lambda_j \triangleq \eta_w \sigma'(\Upsilon) \Gamma 
\left[
\begin{array}{cc}
0_{2nm \times n_u}\\ \hline
\partial Gen(z|C_{x_d},\omega)/\partial z_j
\end{array}\right]
\Phi^{n_u}_\omega.
\end{align}
Therefore, the implementable version of \eqref{eq4.82} can be obtained by the integration by parts as follows:
\begin{align}\label{eq4.14}
z_j (t) = w_j^T(t)e(t)-w_j^T(0)e(0) - \int_0^t v^T(\tau) \Lambda_j(\tau) \|e(\tau)\|_2^2 d\tau,
\end{align}

\begin{assumption}
Assume that $z^*$, $v^*$ and $w^*$ are the bounded optimal values for $z$, $v$ and $w$, respectively. Define the estimation errors as $\tilde{z}\triangleq z-z^*$, $\tilde{v}\triangleq v-v^*$ and $\tilde{w}\triangleq w-w^*$. 
Therefore, the optimal value function and the optimal adaptor network is as follows:
\begin{align}\label{eq4.15}
V^*_v\left(X(t),u(t)\right) &= v^{*^T} \sigma \left(\Upsilon(t) \right) +\delta_4(t) \nonumber \\
\psi^*_{w_j}(X(t)) &= w_j^{*^T} \dot{e}(t) +\delta_5(t),
\end{align}
where $\delta_4(t)$ and $\delta_5(t)$ are the reconstruction error for the critic and the adaptor network, respectively. Assume $\delta_4(t)$ and $\delta_5(t)$ and their derivatives are norm-bounded as 
\begin{align}\label{eq4.16}
\|\delta_4(t)\| &\le \bar{\delta}_4, \quad \|\dot{\delta}_4(t)\| \le \dot{\bar{\delta}}_4, \\
 \|\delta_5(t)\| &\le \bar{\delta}_5, \quad \|\dot{\delta}_5(t)\| \le \dot{\bar{\delta}}_5,
\end{align}
where $\bar{\delta}_4$, $\bar{\delta}_5$, $\dot{\bar{\delta}}_4$, and $\dot{\bar{\delta}}_5$ are positive constants. 
\end{assumption}
Based on \cite{Sutton2017} and \cite{LEE2021}, the TD error \eqref{eq3.20} for the optimal value function equals zero. Hence, we obtain 
\begin{align}\label{eq4.17}
r(t) + v^{*^T} \dot{\sigma} \left(\Upsilon(t) \right) - \gamma v^{*^T} \sigma \left(\Upsilon(t) \right) = \gamma \delta_4(t) - \dot{\delta}_4(t).
\end{align}
%%%%%%%%%%%%%%%%%%%%%%%%%%%%%%%%%%%%%%%%%%%%%%%%%%%%%%%%%%%%%%%%%%%%%%%%%%%%%%%%%%

\subsection{Lyapunov Approach for Stability Analysis}
We first state the following lemma similar to \cite{Yang2015,Xian2004}. Then, the stability analysis of the closed-loop system will be presented.

\begin{lemma} \label{lemma1}
Consider an extra state variable $\Omega(t)$ to augment the state vector as follows
\begin{align}\label{eq4.18}
\dot{\Omega}&\triangleq - s^T \left(\dot{\delta}_3+\dot{D}-\beta_d sgn(e)\right)-M^T_1\dot{e}(t)-M^T_2e,
\end{align}
where $\beta_d$ is a positive constant; $M_1\triangleq\sum_{j=1}^{l_z}\tilde{z}_j w_j$ and $M_2\triangleq\sum_{j=1}^{l_z} \Lambda^T_j v \tilde{w}_j$ such that $\|M_1\|_1<c_{M_1}$, $\|\dot{M}_1\|_1<c_{\dot{M}_1}$, and $\|M_2\|_1<c_{M_2}$. 
If the following equations are satisfied, then $\Omega(t) \ge 0$.  
\begin{align}
\Omega(0)&= \beta_d \|e(0)\|_1-e^T(0)(\dot{\delta}_3(0)+\dot{D}(0)+M_1(0)),  \label{eq4.19}\\
\beta_d &\ge \dot{\bar{\delta}}_3 +c_{d_1} +c_{M_1} + \frac{1}{\alpha} \left(\ddot{\bar{\delta}}_3+c_{d_2}+c_{\dot{M}_1}+c_{M_2}\right). \label{eq4.20}
\end{align}
\end{lemma}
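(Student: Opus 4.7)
The plan is to show $\Omega(t)\ge 0$ by integrating $\dot{\Omega}$ from $0$ to $t$, turning the $\dot{e}$ terms into boundary values plus integrals of $e$ via integration by parts, and then applying the $\beta_d$ bound termwise. First I would split the right-hand side of \eqref{eq4.18} using $s=\dot{e}+\alpha e$ so that each term has either a $\dot{e}$ factor (to be integrated by parts) or an $e$ factor (kept as is). The key cleanup identities are
\begin{align*}
\int_0^t \dot{e}^T\mathrm{sgn}(e)\,d\tau &= \|e(t)\|_1-\|e(0)\|_1,\\
\int_0^t \dot{e}^T(\dot{\delta}_3+\dot{D})\,d\tau &= e^T(\dot{\delta}_3+\dot{D})\Big|_0^t-\int_0^t e^T(\ddot{\delta}_3+\ddot{D})\,d\tau,\\
\int_0^t M_1^T\dot{e}\,d\tau &= M_1^Te\Big|_0^t-\int_0^t \dot{M}_1^T e\,d\tau,
\end{align*}
where the first uses $\tfrac{d}{d\tau}\|e\|_1=\dot{e}^T\mathrm{sgn}(e)$ a.e., justified by treating the non-smoothness as in the Filippov sense (as done in RISE literature).

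Once these substitutions are made, the boundary terms at $\tau=0$ combine with $\Omega(0)$, and the choice in \eqref{eq4.19} for $\Omega(0)$ is precisely what annihilates them. I therefore expect to end up with the clean representation
\begin{align*}
\Omega(t) \;=\;& \beta_d\|e(t)\|_1 - e^T(t)\bigl(\dot{\delta}_3(t)+\dot{D}(t)+M_1(t)\bigr) \\
&+ \alpha\beta_d\int_0^t\|e\|_1\,d\tau \\
&+ \int_0^t e^T\!\left[\ddot{\delta}_3+\ddot{D}+\dot{M}_1-M_2-\alpha(\dot{\delta}_3+\dot{D})\right]d\tau.
\end{align*}
The pointwise and integral pieces can then be treated separately.

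For the pointwise piece, use the Hölder-type bound $|e^T v|\le\|e\|_1\|v\|_\infty\le\|e\|_1\|v\|_1$ together with Assumption \ref{Assumption_disturbance}, Assumption \ref{Assumption_ErrorUss}, and the $L_1$ bound $\|M_1\|_1<c_{M_1}$ to get
\[
\beta_d\|e(t)\|_1 - e^T(t)(\dot{\delta}_3(t)+\dot{D}(t)+M_1(t)) \ge \|e(t)\|_1\bigl(\beta_d-\dot{\bar{\delta}}_3-c_{d_1}-c_{M_1}\bigr),
\]
which is non-negative by the first three terms of \eqref{eq4.20}. For the integral piece, a similar bound gives an integrand no smaller than $\|e\|_1\bigl(\alpha\beta_d-\ddot{\bar{\delta}}_3-c_{d_2}-c_{\dot{M}_1}-c_{M_2}-\alpha(\dot{\bar{\delta}}_3+c_{d_1})\bigr)$, which is also non-negative once \eqref{eq4.20} is used (the hypothesis is in fact stronger than needed, supplying the required slack in both inequalities simultaneously). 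Combining the two non-negative contributions yields $\Omega(t)\ge 0$.

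The main obstacle I anticipate is bookkeeping, not mathematics: tracking signs correctly through the integration by parts and making sure the $\alpha$-scaled cross terms $-\alpha\int e^T(\dot{\delta}_3+\dot{D})\,d\tau$ get absorbed properly into the slack provided by $\alpha\beta_d\int\|e\|_1\,d\tau$. A secondary subtle point is the almost-everywhere validity of $\tfrac{d}{d\tau}\|e\|_1=\dot{e}^T\mathrm{sgn}(e)$ when $e$ crosses zero, but this is standard in the RISE framework \cite{Xian2004} and only requires that $e$ be absolutely continuous, which is guaranteed here since $e$ is obtained from smooth state variables via \eqref{eq18}.
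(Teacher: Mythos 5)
Your proposal is correct and follows essentially the same route as the paper's proof (which, as the paper notes, mirrors the standard argument of \cite{Xian2004,Yang2015}): substitute $s=\dot{e}+\alpha e$, integrate $\dot{\Omega}$ by parts so that \eqref{eq4.19} cancels the boundary terms at $\tau=0$, and bound the remaining pointwise and integral pieces termwise using \eqref{eq4.20}. Your derived representation of $\Omega(t)$ and the observation that \eqref{eq4.20} simultaneously dominates both resulting inequalities are exactly right, so there is nothing to fix.
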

\begin{proof}
Please refer to Supplementary Material, Section I. 
\end{proof}
\begin{theorem} \label{Theorem1}
By choosing positive-definite Lyapunov function
\begin{align}\label{eq4.22}
L(\mu,t) \triangleq &\Omega+\frac{1}{2}(e^Te+s^TA(\bar{x})s+\frac{2\gamma}{\eta_v}\tilde{v}^T\tilde{v}+\sum_{j=0}^{l_z}\tilde{w}^T_j \tilde{w}_j  \nonumber\\
& +\tilde{z}^Tz), \nonumber\\
&\alpha_{min}\|\mu\|^2_2 \le L(\mu,t) \le \alpha_{max}\|\mu\|^2_2,
\end{align}
where $\mu \triangleq \left[\xi^T,\, \tilde{v}^T,\,\tilde{z}^T,\,\tilde{w}^T_1,\,...,\tilde{w}^T_{l_z},\,\sqrt{\Omega}\right]^T$, $\alpha_{min} \triangleq  \frac{1}{2}\min\left\{1,\,1/\eta_v,\,1/\bar{g}\right\}$ , and $\alpha_{max} \triangleq \max\left\{1,\,1/\eta_v,\,1/2\underline{g}\right\}$,
the controller and the update laws given in \eqref{eq3.11}, \eqref{eq16}, \eqref{eq17}, \eqref{eq18}, \eqref{eq4.10}, \eqref{eq4.12}, and \eqref{eq4.14} ensure that all system signals
are bounded if the following inequalities are satisfied. Also, the tracking error can converge to zero asymptotically if $\beta=\beta_d$. 
\begin{gather}
\frac{|\beta-\beta_d|}{\varepsilon} \le \|\xi\| \le \rho^{-1}\left(2\sqrt{k(\bar{\alpha}-\varepsilon)}\right)\label{eq4.26},\\
0 < \varepsilon < \bar{\alpha} \le 1, \label{eq4.27}\\
\|\tilde{v}^Ta\| \ge \frac{\gamma \bar{\delta}_4 + \dot{\bar{\delta}}_4}{\gamma},\label{eq4.28}\\
\|\tilde{v}\| \ge \frac{2 \gamma p(\gamma \bar{\delta}_4 + \dot{\bar{\delta}}_4)}{2\gamma^2 \lambda_{\min}(\Pi) -2 \gamma p \dot{\bar{a}}-\dot{\bar{a}}},\label{eq4.29}\\
2\gamma^2 \lambda_{\min}(\Pi) -2 \gamma p \dot{\bar{a}}-\dot{\bar{a}} \ge 0,  \label{eq4.30}
\end{gather}
where $\bar{\alpha}\triangleq\min\{\alpha,1\}$ and $\lambda_{\min}(\Pi)$ denotes the minimum eigenvalue of matrix $\Pi$.
\end{theorem}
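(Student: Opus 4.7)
The plan is to execute a composite Lyapunov argument tailored to the augmented state $\mu$, split into four logical steps. \textbf{Step 1 (bounds on $L$).} I would first establish the sandwich $\alpha_{\min}\|\mu\|_2^2\le L\le \alpha_{\max}\|\mu\|_2^2$ stated in \eqref{eq4.22}. Assumption \ref{Assumption_G} gives $\underline{g}\le\lambda(g(\bar x))\le\bar g(\bar x)$ and hence $1/\bar g(\bar x)\le \lambda(A(\bar x))\le 1/\underline g$, which controls the $\tfrac12 s^TA(\bar x)s$ block. Lemma \ref{lemma1}, subject to \eqref{eq4.19}--\eqref{eq4.20}, certifies $\Omega(t)\ge 0$ so that the $\sqrt\Omega$ coordinate is a legitimate nonnegative quantity; the remaining parameter-error blocks are genuinely quadratic modulo a routine completion of squares for the $\tilde z^T z$ contribution.

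\textbf{Step 2 ($\dot L$ and the RISE cancellation).} Substituting the closed-loop dynamics \eqref{eq4.6} into $\tfrac{d}{dt}\bigl(\tfrac12 s^TAs\bigr)$ lets the $\tfrac12 s^T\dot A s$ piece from that derivative cancel the $-\tfrac12 \dot A s$ piece in \eqref{eq4.6}, leaving $s^T\tilde N-(k+1)s^Ts-s^Te+s^T(\dot D+\dot\delta_3-\beta\,sgn(e))$. Adding $\dot\Omega$ from \eqref{eq4.18} cancels the $\dot\delta_3$, $\dot D$, and $\beta_d\,sgn(e)$ contributions, leaving only $-(\beta-\beta_d)s^T sgn(e)$, which vanishes when $\beta=\beta_d$. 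Combining with $e^T\dot e=e^Ts-\alpha\|e\|^2$ from $\tfrac12 e^Te$ yields $-\bar\alpha\|\xi\|^2-ks^Ts+s^T\tilde N$. Using \eqref{eq4.7} and Young's inequality, $s^T\tilde N\le \rho^2(\|\xi\|)\|\xi\|^2/(4k)+k\|s\|^2$, so the $ks^Ts$ term absorbs the cross term and the whole block is upper-bounded by $-\bigl(\bar\alpha-\rho^2(\|\xi\|)/(4k)\bigr)\|\xi\|^2$. Condition \eqref{eq4.26} with the $\varepsilon$-margin in \eqref{eq4.27} is precisely what keeps this coefficient strictly positive.

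\textbf{Step 3 (critic and adaptor blocks).} For the critic, I would substitute \eqref{eq4.10} into $\tfrac{2\gamma}{\eta_v}\tilde v^T\dot v$ and use the optimal Bellman identity \eqref{eq4.17} to replace $r(t)$ and $r(t_i)$ by $\gamma\delta_4-\dot\delta_4$ plus the $\tilde v$-dependent parts. The recorded-data summation produces $-2\gamma\,\tilde v^T\Pi\tilde v$; Remark \ref{Remark3} guarantees $\lambda_{\min}(\Pi)>0$, and the residual is dominated by this quadratic decay precisely under \eqref{eq4.28}--\eqref{eq4.30}. For the adaptor, differentiating $\sum_j\tilde w_j^T\tilde w_j$ with \eqref{eq4.12} and $\tilde z^T z$ with \eqref{eq4.14} produces cross terms that, by the definitions $M_1=\sum_j\tilde z_j w_j$ and $M_2=\sum_j\Lambda_j^T v\,\tilde w_j$ in Lemma \ref{lemma1}, equal exactly $M_1^T\dot e+M_2^T e$. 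These are cancelled by the corresponding pieces already subtracted inside $\dot\Omega$. This structural coincidence is the reason the auxiliary state $\Omega$ is designed with those specific $M_1,M_2$ subtractions.

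\textbf{Step 4 (conclusion and main obstacle).} Collecting Steps 2--3 gives, inside the region defined by \eqref{eq4.26}--\eqref{eq4.30}, an estimate of the form $\dot L\le -c_1\|\xi\|^2-c_2\|\tilde v\|^2+c_3|\beta-\beta_d|+c_4$, where $c_4$ aggregates the bounded reconstruction terms $\bar\delta_{3,4,5}$ and their derivatives. Boundedness of $\mu$, and hence of all closed-loop signals through Assumption \ref{Assumption_desired}, follows from the sandwich in Step 1 and a standard ultimate-boundedness argument. When $\beta=\beta_d$ and the reconstruction residuals are negligible, $\dot L\le -c_1\|\xi\|^2\le 0$, so $\xi\in\mathcal L_2\cap\mathcal L_\infty$; uniform continuity of $\|\xi\|^2$, which follows from boundedness of $\dot\xi$ established via \eqref{eq4.6}, lets Barbalat's lemma deliver $\xi(t)\to 0$ and hence $\tilde x(t)\to 0$. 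The principal obstacle is the bookkeeping in Step 3: one must verify that the cross terms from the adaptor and critic updates line up term-for-term with $M_1^T\dot e$ and $M_2^T e$ inside $\dot\Omega$, and that the Young-inequality split in Step 2 is compatible with the $\varepsilon$-margin in \eqref{eq4.27} while the critic sign condition \eqref{eq4.30} remains feasible; any slack here would leave an uncontrolled cross term between the parameter-error and tracking-error blocks that destroys negative semi-definiteness of $\dot L$.
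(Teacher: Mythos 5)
Your proposal is correct and follows essentially the same route the paper takes: the composite Lyapunov function is differentiated block by block, the auxiliary state $\Omega$ from Lemma \ref{lemma1} cancels the $\dot{\delta}_3$, $\dot{D}$, $\beta_d\,sgn(e)$, $M_1^T\dot{e}$, and $M_2^Te$ contributions exactly as you describe, Young's inequality on $s^T\tilde{N}$ against $-k\|s\|^2$ yields condition \eqref{eq4.26}, and the concurrent-learning term $-\gamma\tilde{v}^T\Pi\tilde{v}$ together with the Bellman residual \eqref{eq4.17} yields \eqref{eq4.28}--\eqref{eq4.30}. Your identification of the term-for-term matching between the adaptor/critic cross terms and the $M_1$, $M_2$ subtractions built into $\dot{\Omega}$ is precisely the structural point the paper's argument rests on.
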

\begin{proof}
Please refer to Supplementary Material, Section II. 
\end{proof}

According to \eqref{eq4.22} and \eqref{eq4.26} and based on Theorem 4.18 of \cite{Khalil2002}, an approximation for the RoA is given by 
\begin{align}\label{eq4.31}
\|\xi(t_0)\| \le \sqrt{\frac{\alpha_{\min}}{\alpha_{\max}}}\rho^{-1}\left(2\sqrt{k(\bar{\alpha}-\varepsilon)}\right);
\end{align}
also, we can obtain ultimate bounds (UB) on the extended tracking error $\xi$ (if $\beta \ne \beta_d$) and the value estimation error $\tilde{v}$ as follows:
\begin{gather}
\textrm{UB}_{\xi} = \sqrt{\frac{\alpha_{\max}}{\alpha_{\min}}} \left(\frac{|\beta-\beta_d|}{\varepsilon}\right) \label{eq4.32}, \\
\textrm{UB}_{\tilde{v}} = \sqrt{\frac{\alpha_{\max}}{\alpha_{\min}}} \max\left\{\frac{\gamma \bar{\delta}_4 + \dot{\bar{\delta}}_4}{\gamma} ,\frac{2 \gamma p(\gamma \bar{\delta}_4 + \dot{\bar{\delta}}_4)}{2\gamma^2 \lambda_{\min}(\Pi) -2 \gamma p \dot{\bar{a}}-\dot{\bar{a}}}\right\}.\label{eq4.33}
\end{gather}
Based on the presented analysis in the previous subsection, some theoretic remarks for the proposed controllers are as follows:
\begin{itemize}

\item Based on Theorem \ref{Theorem1} and its results, the estimation errors of the critic and the adaptor networks are bounded. Therefore, they are approximately optimal ultimately.

\item The convergence of the presented Meta-RL framework does not affect the tracking error convergence. In other words, it is the regulation controller which preserves the stability of the closed-loop error dynamics and robustness to bounded disturbances. 

\item According to \eqref{eq4.31}, the RoA approximation depends on the control gains $k$ and $\alpha$. Therefore, the closed-loop system is semi-globally stable. 

\item According to Theorem 4.18 of \cite{Khalil2002} and \eqref{eq4.26}, we can determine the minimum required control gain $k$ for a specific initial error as follows:
\begin{align}\label{eq4.34}
k \ge \frac{1}{4(\bar{\alpha}-\varepsilon)}\rho^{2}\left(\max{\left\{\sqrt{\frac{\alpha_{\max}}{\alpha_{\min}}}\|\xi(t_0)\|,\: \textrm{UB}_{\xi} \right\}} \right).
\end{align}
Also, according to \eqref{eq4.20}, the minimum required control gain $\beta$ for asymptotic stability is as follows:
\begin{align}\label{eq4.35}
\beta \ge \dot{\bar{\delta}}_3 +c_{d_1} +c_{M_1} + \frac{1}{\alpha} \left(\ddot{\bar{\delta}}_3+c_{d_2}+c_{\dot{M}_1}+c_{M_2}\right).
\end{align}

\item Suppose the rates of change of the estimation error for the steady-state control ($\dot{\bar{\delta}}_3$ and $\ddot{\bar{\delta}}_3$) are significant. Then, according to \eqref{eq4.35} and \eqref{eq4.32}, the ultimate bound of the extended tracking error will be considerable, and the higher control gain $\beta$ will be required to make the closed-loop system asymptotically stable.

\item According to \eqref{eq4.27}, \eqref{eq4.32}, and \eqref{eq4.34}, if we consider a large value for $\varepsilon$ in the interval $(0,\bar{\alpha})$, then the ultimate bound of the extended tracking error will be small, and more considerable control gain $k$ will be required to make the closed-loop system stable.

\item Based on \eqref{eq4.31}, \eqref{eq4.32}, and \eqref{eq4.34}, the condition number of the matrix $g(\bar{x})$ can affect the system stability characteristics and the minimum required control gain $k$. 

\item For \eqref{eq4.30} to be held, $\lambda_{\min}(\Pi)$ should be large enough. Therefore, we need to collect rich data to increase $\lambda_{\min}(\Pi)$. In other words, we need more exploration and exciting data  to ensure that \eqref{eq4.30} holds. Also, when $\lambda_{\min}(\Pi)$ is large, based on \eqref {eq4.33}, the ultimate bound of the value function estimation error may be decreased. 

\item According to \eqref{eq4.33}, the ultimate bound of the value function estimation error directly relates to the reconstruction error of the critic network. 

\item We can similarly state this stability analysis for the presented indirect adaptive control. With the difference that we do not consider the last three terms in the Lyapunov function \eqref{eq4.22}.

\end{itemize}

%%%%%%%%%%%%%%%%%%%%%%%%%%%%%%%%%%%%%%%%%%%%%%%%%%%%%%%%%%%%%%%%%%%%%%%%%%%%%%%%%%%%%%%%%%%%%%%%%%%%%%

\section{Simulation Results} \label{section: Simulation Results}

This section consists of two parts, beginning with a 2-DoF robot manipulator simulation to verify and compare the proposed controllers' effectiveness with the previous work \cite{Yang2015}. Then, we demonstrate the scalability of the proposed control scheme by applying our novel Meta-RL structure to a large-scale musculoskeletal system.

%%%%%%%%%%%%%%%%%%%%%%%%%%%%%%%%%%%%%%%%%%%%%%%%%%%%%%%%%%%%%%%%%%%%%%%%%%%%%%%%%%%%%%%%%%%%%%%%%%%%%%
\subsection{Manipulator Robot} \label{section: Exm1}
This subsection presents the simulation results of the proposed control schemes on a 2-DoF manipulator shown in Fig. \ref{fig:F_2DOF}. To evaluate the effectiveness of the proposed methods, we compare the results with the previous method \cite{Yang2015}. The dynamics of the 2-DoF manipulator \cite{lewis2003robot} is given by 
\begin{align}\label{eq5.1111}
&(\eta_1+\eta_2+2\eta_3\cos(x_2))\ddot{x}_1 + (\eta_2+\eta_3\cos(x_2))\ddot{x}_2 \nonumber\\
&-\eta_3(2\dot{x}_1\dot{x}_2+\dot{x}_2^2)\sin(x_2)+\eta_4\eta_1\cos(x_1)\nonumber\\
&+\eta_3\eta_4\cos(x_1+x_2)+d_1+\tau_{f_1}=u_1 \nonumber\\
&(\eta_2+\eta_3\cos(x_2))\ddot{x}_1+\eta_2\ddot{x}_2+\eta_3\dot{x}^2_1\sin(x_2)\nonumber\\
&+\eta_3\eta_4\cos(x_1+x_2)+d_2+\tau_{f_2}=u_2,
\end{align}
where $\eta_1\triangleq(m_1+m_2)l_1^2$, $\eta_2\triangleq m_2l_2^2$, $\eta_3\triangleq  m_2l_1 l_2$, $\eta_4\triangleq g/l_1$, and $\tau_{f_i}\triangleq F_{s_i} sgn(\dot{x}_i) + F_{v_i} (\dot{x}_i)$. 
Tables \ref{table_1} and \ref{table_2} present the description and the parameters' value of the 2-DoF robot manipulator, respectively.

\begin{figure}[!h]
  \centering
  \includegraphics[width=0.6\columnwidth]{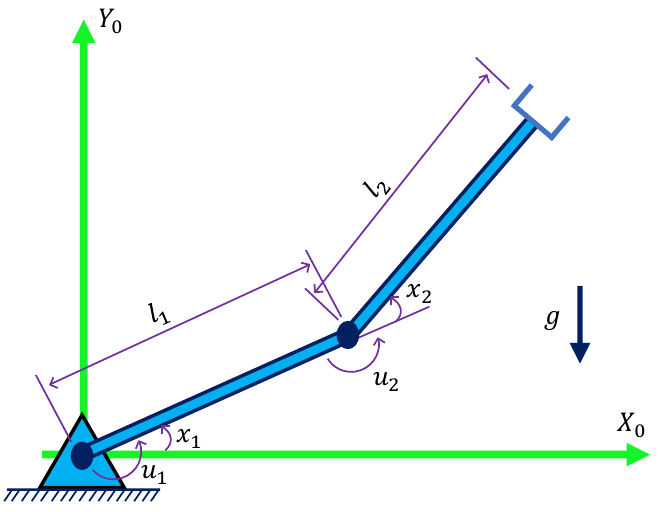}
  \caption{Diagram of the 2-DoF robotic manipulator.}\label{fig:F_2DOF}
\end{figure}

\begin{table}[!t]
\renewcommand{\arraystretch}{1.2}
\caption{Description of the System Parameters}
\label{table_1}
\centering
\begin{tabular}{c|l}\hline \hline
 \bfseries Parameter & \bfseries  Description\\ \hline 
$x_i \: (rad) $ & Angular position of $i^{\text{th}}$ joint\\ 
$\dot{x}_i \: (rad/s)$ & Angular velocity of $i^{\text{th}}$ joint \\ 
$u_i \: (N.m)$ & Applied torque at $i^{\text{th}}$ joint\\ 
$m_i \: (kg)$ & Mass of $i^{\text{th}}$ link\\ 
$l_i \: (m)$ & Length of $i^{\text{th}}$ link\\ 
$F_{s_i}\:(kg.m^2/s^2)$ & Static friction coefficient of $i^{\text{th}}$ joint\\ 
$F_{v_i}\:(kg.m^2/s)$ & Viscous friction coefficient of $i^{\text{th}}$ joint \\ 
$g \:(m/s^2)$ & Gravity acceleration  \\ \hline \hline
\end{tabular}
\end{table}
\begin{table}[!t]
\renewcommand{\arraystretch}{1.2}
\caption{System Parameters with Specified Values}
\label{table_2}
\centering
\begin{tabular}{c|c|c|c|c|c|c|c|c} \hline \hline 
 $m_1$ & $m_2$ & $l_1$ & $l_2$ & $F_{s_1}$ & $F_{s_2}$ & $F_{v_1}$ & $F_{v_2}$ & $g$\\ \hline
 7.0 & 4.0 & 0.5 & 0.5 & 0.8 & 0.8 & 4.0 & 4.0 & 9.8\\ \hline \hline
\end{tabular}
\end{table}
We considered 50 percent parametric uncertainty for the nominal values given in Table \ref{table_2} to generate the data and train the CGAN and showed the results of training the CGAN in Fig. \ref{fig:F_CGAN}.
According to Table \ref{table_3}, we chose the controller parameters for our proposed schemes and the baseline controller \cite{Yang2015}, where $N_p$ denotes the number of the policy's tunable parameters.
\begin{table}[!t]
\renewcommand{\arraystretch}{1.2}
\caption{Controllers Parameter Values}
\label{table_3}
\centering
\begin{tabular}{c|c|c|c|c|c} \hline \hline 
 Parameter & $\beta$ & $\alpha$ & $k$ & $\lambda_0$ & $N_p$  \\ \hline 
Proposed Controllers & 0 & 5 & 20 & diag([4,\,2])&2\\ \hline 
Baseline Controller & Adaptive & 5 & 20 & diag([4,\,2])&31\\ \hline  \hline 
\end{tabular}
\end{table}
According to \eqref{eq4.3} and \eqref{eq4.6}, the closed-loop extended tracking error dynamics depend on $\dot{u}$. Therefore, we can add a bias to the control input applied to the system. To bring the initial control input to zero, we consider a new control input as $u_{new}\triangleq u(t)-u(0)$, such that $u(0)=ke(0)-\hat{C}_{u_{ss}}(0)\Phi_\omega^{n_u}(0)$.
The initial states of the system are selected as $x(0)=[0,0]^T$. 
Our objective is to control the robot arm such that it tracks optimally the desired trajectory $x_d(t)=[\frac{\pi}{4}+0.2\cos(2\pi t),\sin(2\pi t)]^T$ in presence of the uncertainty and disturbance.

We have run two simulations to evaluate and compare the controllers' performances. The former has been conducted without the disturbance $d$. Figs. \ref{fig:F_Result_x} and \ref{fig:F_Result_u} show the results.
While after the 22nd second of this simulation, we suddenly load the robot with 25 percent of the second link's mass.
Therefore, a rapid uncertainty change occurs in the system, and our controllers can recover the system appropriately. 
In the second simulation, we subject the system to a bounded disturbance whose amplitude is about 50 percent of the required input torque as
\begin{align}\label{eq5.2}
d=[-80\sin(10t),\: 30\cos(10t)]^T.
\end{align}

\begin{table*}[!t]
\begin{threeparttable}[b]
\renewcommand{\arraystretch}{1.2}
\caption{Performance Measurement}
\label{table_4}
\centering
\setcellgapes{1.3pt}\makegapedcells
\begin{tabular}{|c||ccc|ccc|}\cline{2-7}\hline \hline 
\multicolumn{1}{|c||}{\multirow{2}{*}{Performance Criteria}}&\multicolumn{3}{c|}{First Simulation (With Varying Uncertainty)}&\multicolumn{3}{c|}{Second Simulation (With Disturbance)}\\ \cline{2-7}
\multicolumn{1}{|c||}{}&\makecell{Baseline Controller\\(Adaptive $\beta$)}&\makecell{Indirect Approach\\($\beta=0$)}&\makecell{Direct Approach\\($\beta=0$)} &\makecell{Baseline Controller\\(Adaptive $\beta$)}&\makecell{Indirect Approach\tnote{a}\\($\beta=0$)}&\makecell{Direct Approach\\($\beta=0$)} \\ \hline \hline
IAE &             16.43  & 11.50  &  \textbf{11.25}   &  22.77    &  19.46  &  \textbf{15.33}\\ \hline 
CE &               30.55 & 29.20  &  \textbf{28.55}   &  36.15    &  32.70  &  \textbf{31.08}\\ \hline 
IAR &             49.25  & 30.5   &   \textbf{29.64}  &  63.74   &  44.06  &  \textbf{40.79}\\ \hline 
Cost$(J_p)$ & 96.23  &  71.2  &   \textbf{69.44}     & 122.66  &  96.22  &  \textbf{87.20}\\ \hline \hline
\end{tabular}

 \begin{tablenotes}
    \item [a] In the presence of the disturbance, EKF can't converge. 
   \end{tablenotes}
  \end{threeparttable}
\end{table*}

The performance of the closed-loop system for different controllers and situations is measured in terms of tracking error, control effort, and actuation rate by the following cost functions and reported in Table \ref{table_4}. 
\begin{align}\label{eq5.1}
\text{Integral of Absolute Error (IAE)}=& \int_0^{t_f}  Q\|\tilde{x}(\tau)\|_1 d\tau, \nonumber\\
\text{Control Effort (CE)}=& \int_0^{t_f} R_1\|u(\tau)\|_1 d\tau, \nonumber\\
\text{Integral of Actuation Rate (IAR)}=& \int_0^{t_f} R_2\|\dot{u}(\tau)\|_1 d\tau, \nonumber\\ 
\text{Cost:}\; J_p \triangleq \text{IAE} + \text{CE} &+\text{IAR}. 
\end{align}
where $t_f$ is the final time of the simulation; $Q$, $R_1$, and $R_2$ are constant appropriate positive diagonal matrices. 

Because we selected control gain $\beta$ equals zero for our proposed methods, we do not have any high-frequency oscillation in the control input (Fig. \ref{fig:F_Result_u}). 
Therefore, according to Table \ref{table_4}, our proposed methods yield a minimum actuation rate. 
Based on \eqref{eq4.32}, there will be an ultimate bound for tracking error that affects performance. Still, the adaptation mechanisms can tune the generator network's input noise to reduce this ultimate bound by reducing $\delta_3$ in \eqref{eq4.3333} and \eqref{eq4.35}. 
Also, the RL framework can optimize performance in the direct approach by balancing the control effort and the tracking error. 
While in the baseline controller \cite{Yang2015} and \cite{Fan2018b}, the control gain $\beta$ must be adjusted by the tracking error adaptively. Therefore, if a significant tracking error occurs for the baseline controller, the value of $\beta$ increases and chattering in the control input worsens (Fig. \ref{fig:F_Result_u}).

\begin{figure}[!h]
  \centering
  \includegraphics[width=1\columnwidth]{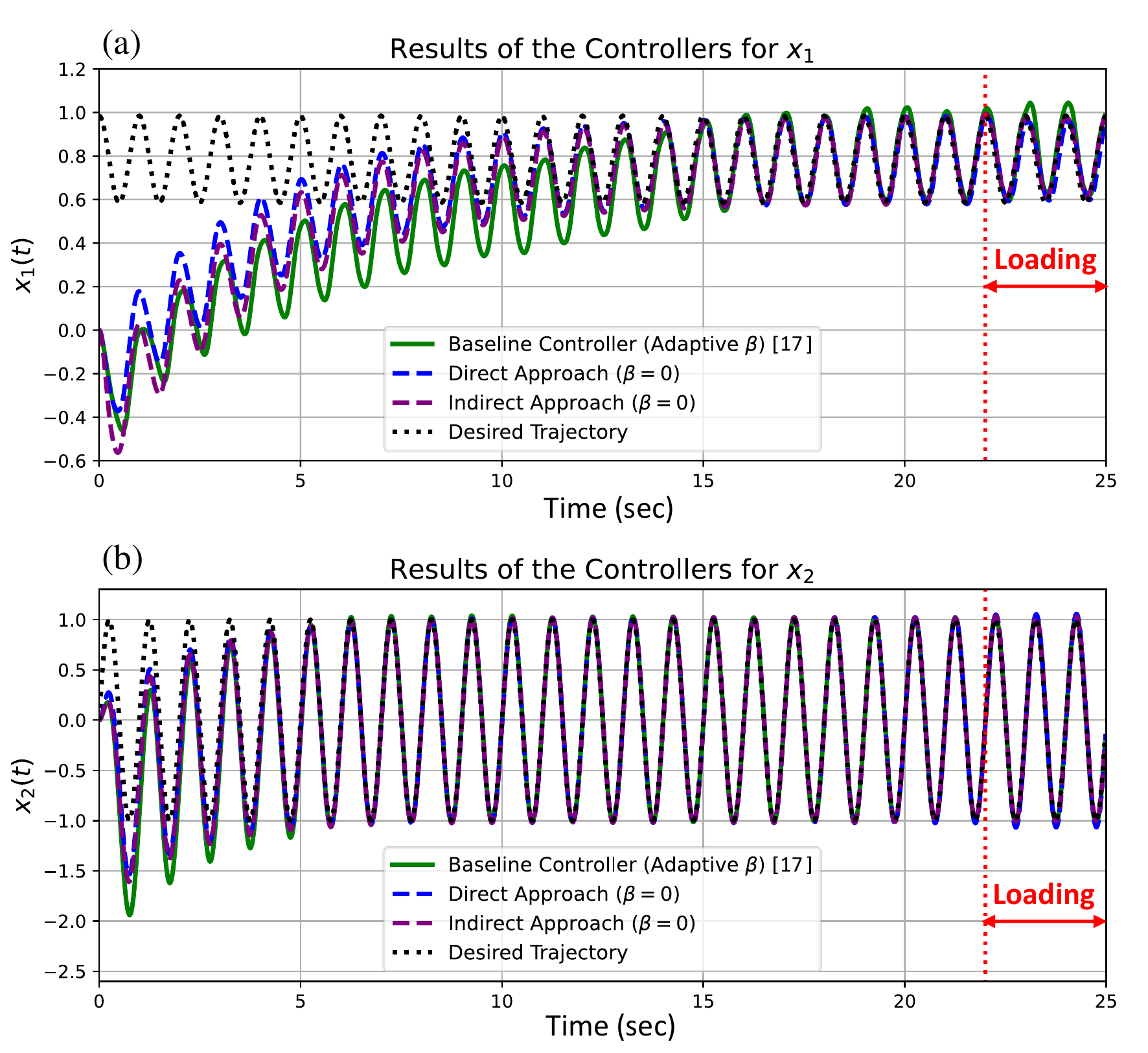}
  \caption{Position tracking of the first simulation for baseline controller \cite{Yang2015} with adaptive $\beta$ (solid green line), direct (dashed blue line) and indirect (dashed purple line) adaptive approach with $\beta=0$: (a) angular position of the first joint and (b) the second joint. The robot arm is loaded after the 22nd second.}\label{fig:F_Result_x}
\end{figure}
\begin{figure}[!h]
  \centering
  \includegraphics[width=1\columnwidth]{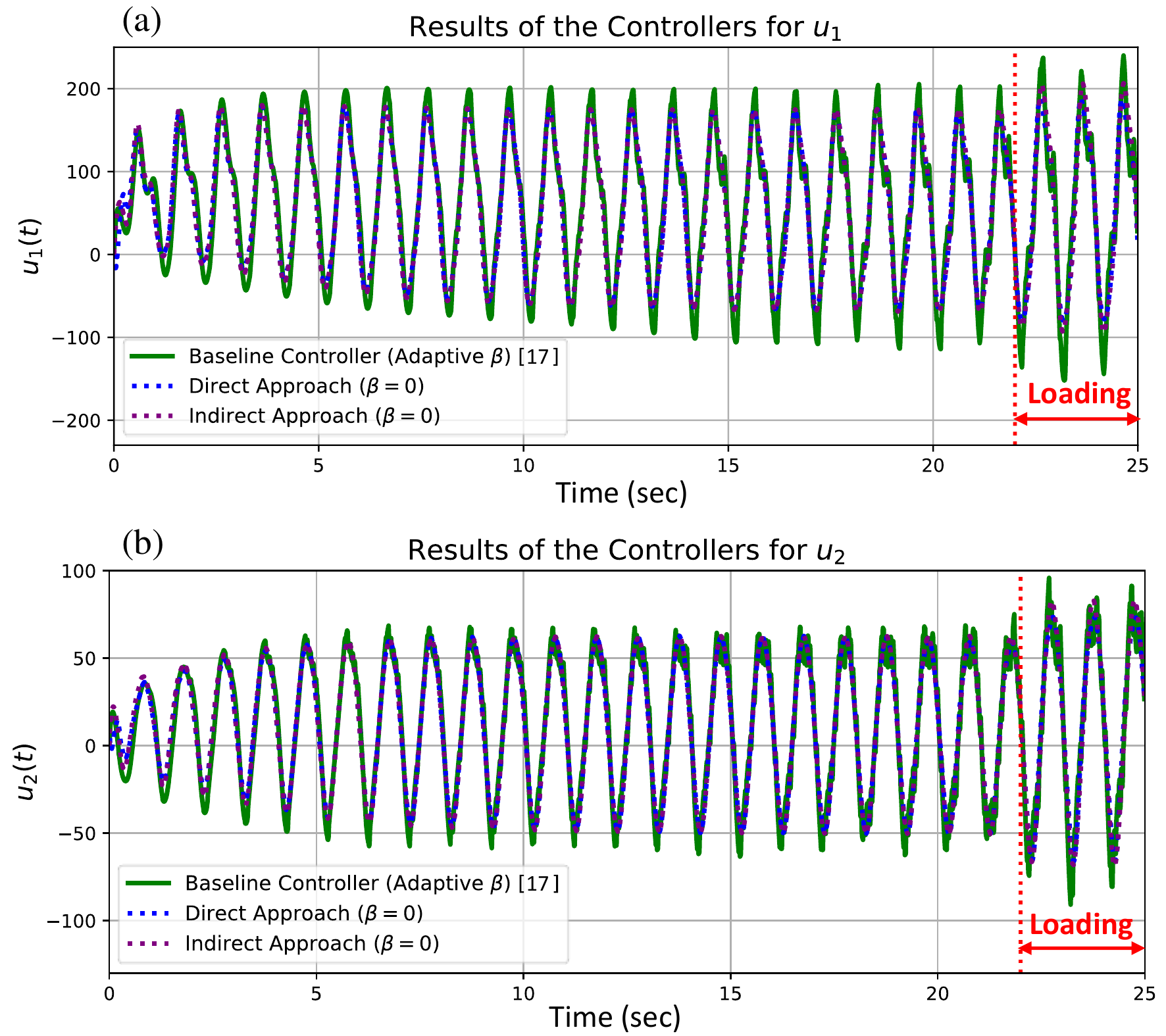}
  \caption{Control input signal of the first simulation for baseline controller \cite{Yang2015} with adaptive $\beta$ (solid green line), direct (dotted blue line) and indirect (dotted purple line) adaptive approach with $\beta=0$: (a) applied torque at the first joint and (b) the second joint. The robot arm is loaded after the 22nd second.}\label{fig:F_Result_u}
\end{figure}

The results in Table \ref{table_4} demonstrate that all controllers can tackle the bounded disturbance and the varying parametric uncertainty and preserve the stability of the closed-loop system. However, our direct adaptive approach has more optimal results than the others since it leverages the RL framework to optimize the value function. Furthermore, because EKF needs to know the structure of the dynamic model to converge, it cannot converge in the presence of the disturbance. Therefore, the indirect adaptive approach does not yield optimal results when faced with the disturbance. However, it keeps the system stable and has a more optimal performance than the baseline controller because of the capability of the CGAN that consistently generates the patterns close to the system's behavior. 
On the other hand, the direct adaptive approach is highly efficient in such a significant disturbance \eqref{eq5.2}; however, we did not consider any disturbance similar to \eqref{eq5.2} in the training data set of CGAN. In other words, the conditional generator in the Meta-RL framework could produce an adequate pattern according to this disturbance without the need for the integral of the sign of error term. This is the consequence of an effective combination between the generator network's generalization and the RL framework's optimality.

Another advantage of the proposed control schemes compared to previous methods is the small number of policy's tunable parameters, as stated in Table \ref{table_3}. We have only two parameters ($l_z=2$) to adjust policy in these simulations, whereas the dynamic model \eqref{eq5.1111} has eight parametric uncertainties.
This dimension reduction of the parametric uncertainties makes the adaptation mechanisms convenient and fast.
Thus, our methods are suitable for fast and optimal adaptive control of dynamic systems with many unknown parameters, particularly large-scale systems.

%%%%%%%%%%%%%%%%%%%%%%%%%%%%%%%%%%%%%%%%%%%%%%%%%%%%%%%%%%%%%%%%%%%%%%%%%%%%%%%%%%%%%%%%%%%%%%%%%%%%%%

\subsection{Large-Scale Musculoskeletal System}
This subsection presents the motion control of the human lumbar spine (Fig. \ref{fig:F_Lumbar}) as a large-scale musculoskeletal system because this dynamic system has a high degree of freedom and many redundancies \cite{Christophy2012, Rupp2015} and can show the power of our proposed control scheme to face larger systems. The human vertebral column or spine consists of different parts and tissues, including intervertebral joints, discs, ligaments, tendons, and muscles. The muscles are active tissues and play the role of the actuators for the motion control system. The other tissues are passive. The spine's vertebrae together form a passive multi-rigid body \cite{Christophy2012, Rupp2015, Raabe2016}. Therefore, we can model intervertebral joints using the Euler-Lagrange equation. Also, we can model intervertebral discs and ligaments based on their viscoelasticity property \cite{Christophy2012, Rupp2015, Raabe2016}. We consider a rigid-tendon model \cite{Millard2013} for the motion control of the human spine as follows: 
\begin{align} 
M(q) \ddot{q} + C(q,\dot{q}) 	&= \tau_p(q,\dot{q}) + \tau_{load} + \tau_M (q,\dot{q},u), \label{eq:MRB}\\ 
\tau_M (q,\dot{q},u)  &= R(q) F_M(q,\dot{q},u), \label{eq:ARM}\\
 F_M(q,\dot{q},u) &= F_A(q,\dot{q}) u + F_P(q,\dot{q}), \label{eq:Muscle}
\end{align}
where $q\in\mathbb{R}^{n\times 1}$ is the generalized coordinates for description of the system configuration; $M(.)\in\mathbb{R}^{n\times n}$ is the inertia matrix and is positive-definite; $C(.,.)\in\mathbb{R}^{n\times 1}$ is the torque related to the centrifugal, Coriolis, and gravity forces; $\tau_p\in\mathbb{R}^{n\times 1}$ is the passive torque originated from the passive tissues; $\tau_{load}\in\mathbb{R}^{n\times 1}$ is the external load torque generated on the spine; $\tau_M\in\mathbb{R}^{n\times 1}$, $F_M\in\mathbb{R}^{l_m\times 1}$, $F_A\in\mathbb{R}^{l_m\times 1}$, and $F_P\in\mathbb{R}^{l_m\times 1}$  are the total torque, total force, active force, and passive force caused by muscle fibers, respectively, and computed by models suggested by \cite{Millard2013}; $u\in\mathbb{R}^{l_m\times 1}$ is the activation level of muscle fibers consisting of real numbers between zero and one; $l_m$ is the number of the muscle fibers; $R(.)\in\mathbb{R}^{n \times l_m}$ is the moment arm of muscles related to the joint and muscle geometry. 
We exploit the data prepared by \cite{Raabe2016} in the OpenSim software \cite{Delp2007} for the geometry of the musculoskeletal system.
Also, we have chosen the required parameters and constants to simulate the musculoskeletal system from \cite{Rupp2015, Christophy2012}. 

Similar to \cite{Nasseroleslami2014}, we consider a 3D inverted pendulum with three rotational DoF to model the trunk of the vertebral column in (\ref{eq:MRB}). The orientation of this rigid body is obtained by three Euler angles for rotation around x, y, and z axes, and displayed by $q_1$, $q_2$, and $q_3$, respectively. 
Also, we consider 162 muscle fibers in seven groups shown in Fig. \ref{fig:F_Lumbar} (b), including 2 fibers for Rectus Abdominis (RA) muscle, 12 fibers for External Oblique (EO) muscle, 12 fibers for Internal Oblique (IO) muscle, 24 fibers for Iliocostalis (IL) muscle, 52 fibers for Longissimus (LT) muscle, 36 fibers for Quadratus (QL) muscle, and 24 fibers for Multifidus (MF) muscle. Two markers, shown in Fig. \ref{fig:F_Lumbar} (a), were considered to capture the system's movement in the workspace. 
\begin{figure}[!h]
  \centering
  \includegraphics[width=1\columnwidth]{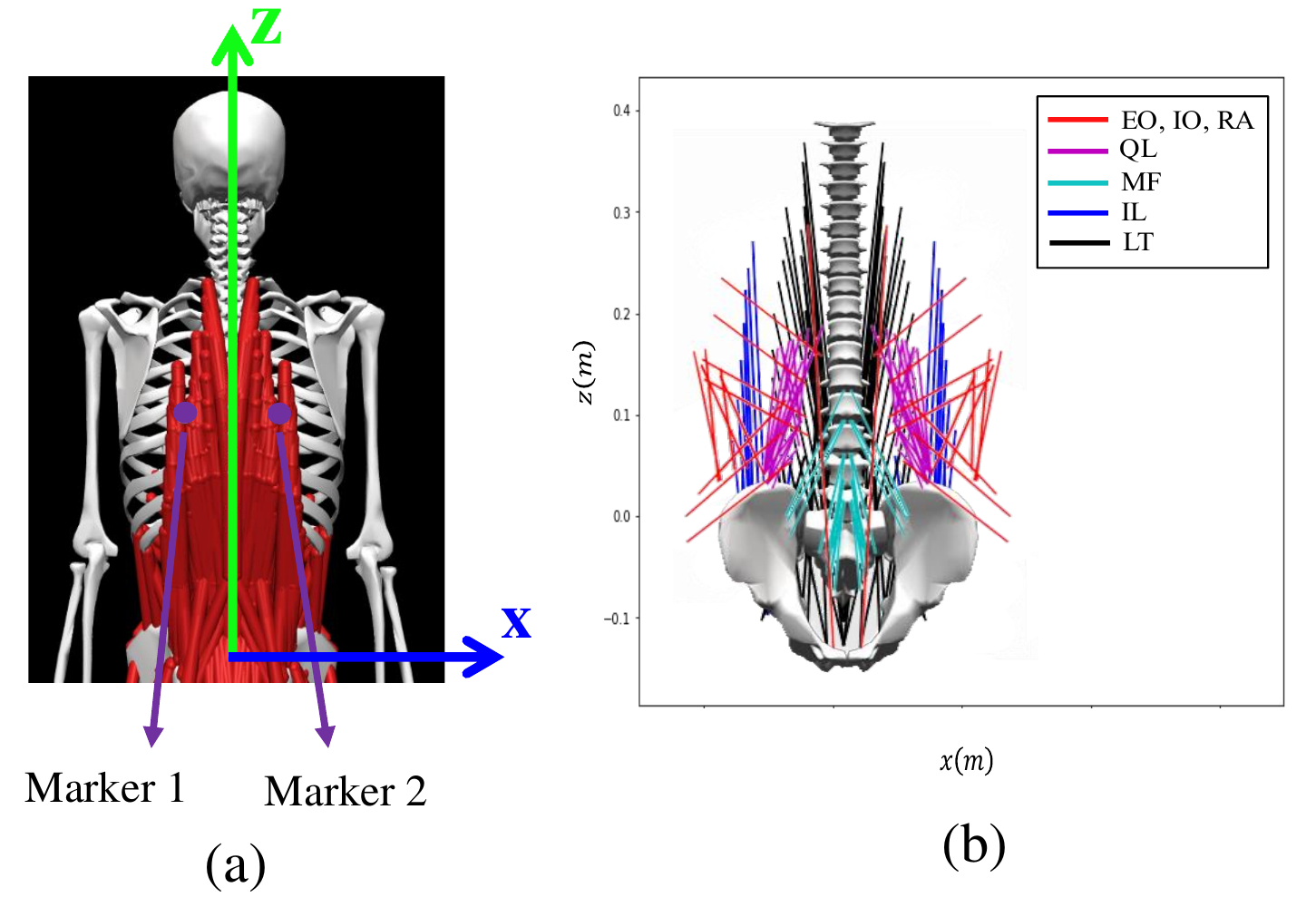}
  \caption{Musculoskeletal system of the spine in the x-z plane: (a) shows the human spine and location of the considered markers for motion capture, (b) position of modeled muscle fibers around the spine.}\label{fig:F_Lumbar}
\end{figure}

The goal is to determine the muscle activation vector ($u$) as system input at any time so that the markers on the human spine track the desired movement. Furthermore, the controller must be able to handle the redundancies in the system according to some criteria, tackle the unknown dynamics, disturbances, and perturbations optimally, and preserve the stability of the closed-loop system. Hence, we apply the proposed Mete-RL framework with a trajectory planning and the computed muscle control (CMC) method introduced by \cite{Thelen2003} to control the system (see Supplementary Material Section III for details). 

To evaluate the proposed Meta-RL framework, we investigate the repetitive rotation of the spine around the x-axis, called flexion/extension movement, while the external load on the spine varies. The spine bends forward up to 50 degrees in the y-z plane and returns to its start point. This motion is repeated with a period of 1.5 seconds. After the 8th second, an unknown load is applied to the spine. 
Figs. \ref{fig:F_FE_q}, \ref{fig:F_FE_Flex}, and \ref{fig:F_FE_Ext} show the orientation of the spine, the flexor muscles' activation level, and the extensor muscles' activation level, respectively.
As can be seen from the results, after applying the unknown load on the spine, the control structure can recover the system well and adapt to the change in the system, and after a short time, the system tracking error is zero. In addition, by loading the spine, the torque required by the system increases; consequently, the activation of the muscles also increases. 
\begin{figure}[!h]
  \centering
  \includegraphics[width=0.9\columnwidth]{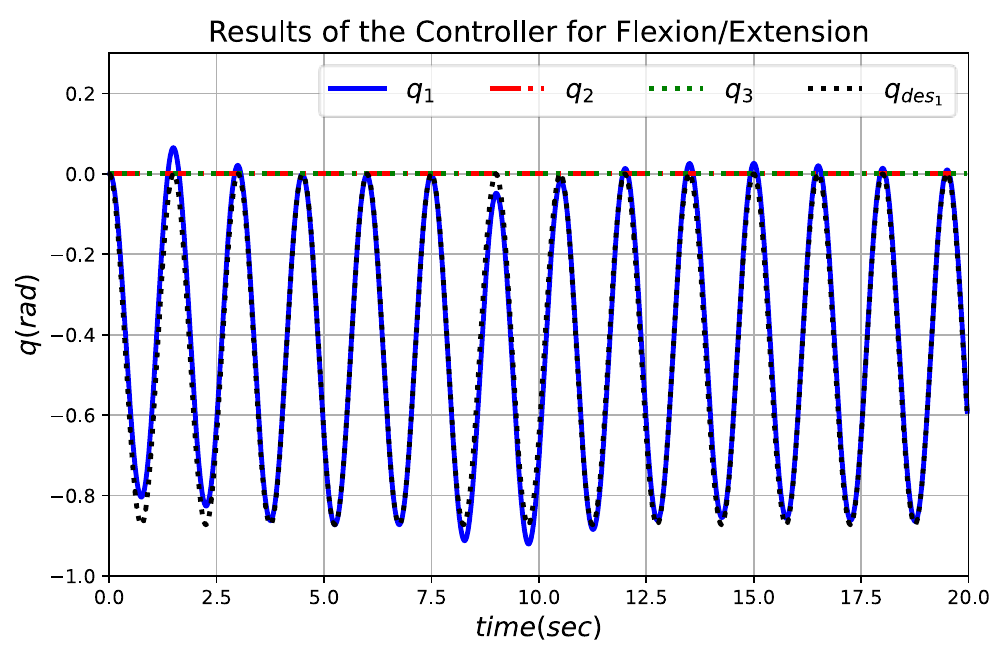}
  \caption{Orientation angles of the spine for flexion/extension movement in the y-z plane. The spine is loaded after the 8th second.}\label{fig:F_FE_q}
\end{figure}
\begin{figure}[!h]
  \centering
  \includegraphics[width=1\columnwidth]{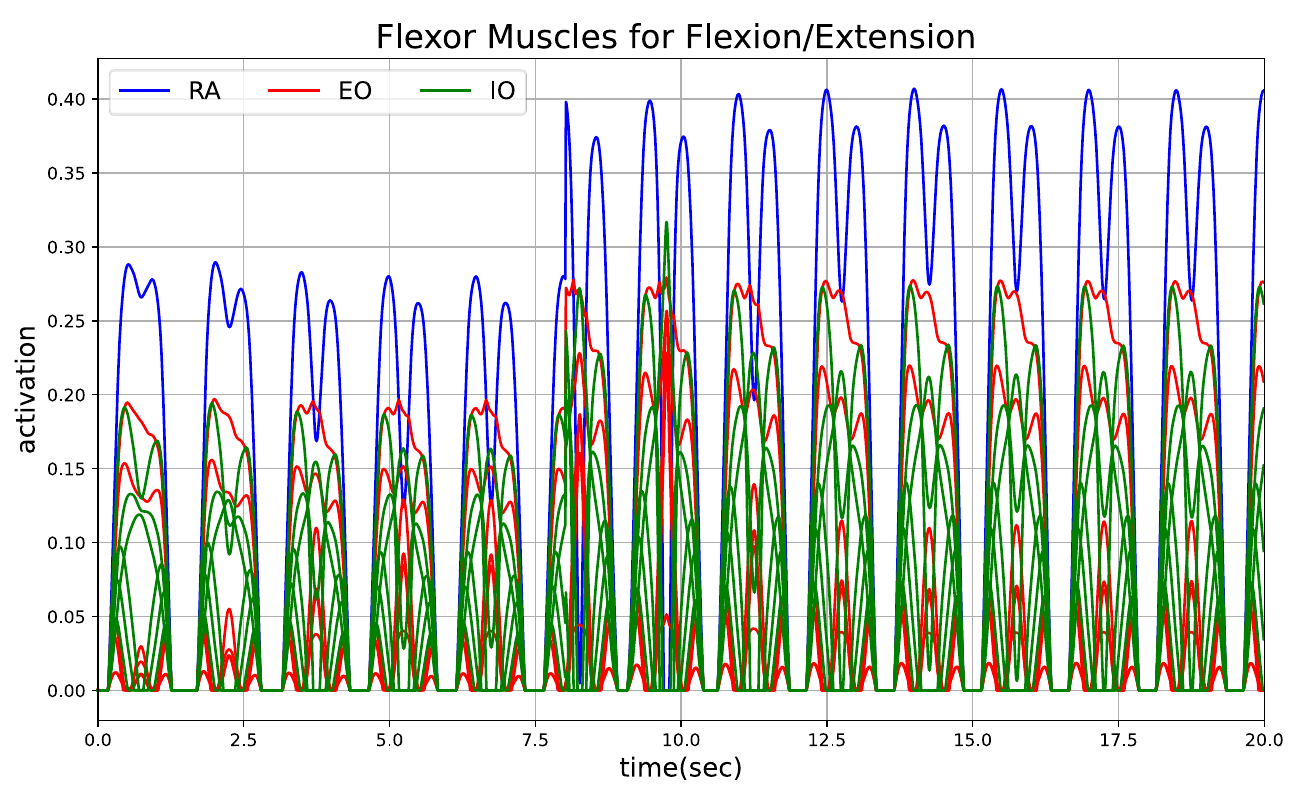}
  \caption{The activation level of the flexor muscles on the right-hand side of the spine for flexion/extension movement in the y-z plane. The spine is loaded after the 8th second.}\label{fig:F_FE_Flex}
\end{figure}
\begin{figure}[!h]
  \centering
  \includegraphics[width=1\columnwidth]{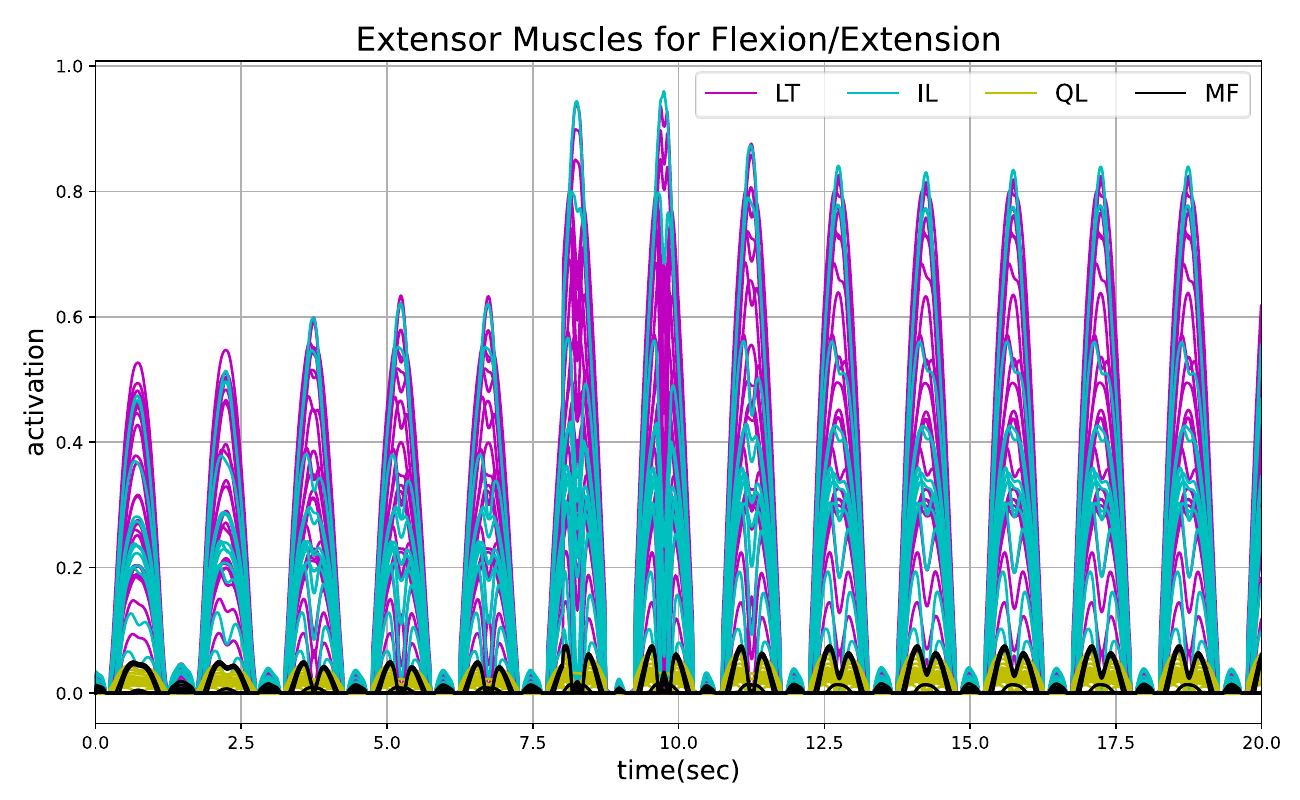}
  \caption{The activation level of the extensor muscles on the right-hand side of the spine for flexion/extension movement in the y-z plane. The spine is loaded after the 8th second.}\label{fig:F_FE_Ext}
\end{figure}

Furthermore, we have prepared two other simulation results for point-to-point oscillatory (P2POSC) and point-to-point regulatory (P2PREG) movements in Section IV of Supplementary Material.

%%%%%%%%%%%%%%%%%%%%%%%%%%%%%%%%%%%%%%%%%%%%%%%%%%%%%%%%%%%%%%%%%%%%%%%%%%%%%%%%%%%%%%%%%%%%%%%%%%%%%%

\section{Conclusion} \label{section: Conclusion and Future work}

The main goal of the current study was to design fast and optimal adaptive tracking control for a class of MIMO uncertain nonlinear systems using RL and CGAN that leads to a novel Meta-RL framework. 
In this data-driven control scheme, CGAN was used to model the uncertain dynamic system using previous data, RL was employed to adapt the CGAN to the system optimally, and the RISE controller ensures stability and robustness. 
Data-driven learning-based control makes control tasks more flexible and scalable. However, stability analysis is a significant challenge for such approaches. This paper solves this research gap by leveraging the RISE controller.
We used CGAN to reduce the dimension of the uncertainties and produce the adaptation mechanism more efficiently than common adaptive controller and policy gradient methods.
The CGAN, as mentioned above, was also introduced and exploited as an extended describing function, adaptable CPG, and fault detector. 
Finally, the simulation results showed the proposed schemes' effectiveness and scalability.
In comparison to RISE-NN structures, our control schemes can reveal more optimal performance without the need for the integral of the sign of error term.

Future efforts will focus on broadening the class of nonlinear systems to be controlled. Also, it would be interesting to handle some hard constraints on the input or state variables in the proposed control structures. Finally, we need to establish a control scheme that simultaneously learns CGAN and Meta-RL framework. 
Moreover, to increase the generalization of the CGAN, we need a large amount of data, which is another limitation for practical applications.

\ifCLASSOPTIONcaptionsoff
  \newpage
\fi

% trigger a \newpage just before the given reference
% number - used to balance the columns on the last page
% adjust value as needed - may need to be readjusted if
% the document is modified later
%\IEEEtriggeratref{8}
% The "triggered" command can be changed if desired:
%\IEEEtriggercmd{\enlargethispage{-5in}}

% references section

% can use a bibliography generated by BibTeX as a .bbl file
% BibTeX documentation can be easily obtained at:
% http://mirror.ctan.org/biblio/bibtex/contrib/doc/
% The IEEEtran BibTeX style support page is at:
% http://www.michaelshell.org/tex/ieeetran/bibtex/
%\bibliographystyle{IEEEtran}
% argument is your BibTeX string definitions and bibliography database(s)
%\bibliography{IEEEabrv,../bib/paper}
%
% <OR> manually copy in the resultant .bbl file
% set second argument of \begin to the number of references
% (used to reserve space for the reference number labels box)

\bibliographystyle{IEEEtran}
%\bibliography{IEEEabrv,My_Collection}
\bibliography{IEEEabrv,My_Collection_New}
%\bibliography{IEEEfull,My_Collection_New}

% biography section
% 
% If you have an EPS/PDF photo (graphicx package needed) extra braces are
% needed around the contents of the optional argument to biography to prevent
% the LaTeX parser from getting confused when it sees the complicated
% \includegraphics command within an optional argument. (You could create
% your own custom macro containing the \includegraphics command to make things
% simpler here.)
%\begin{IEEEbiography}[{\includegraphics[width=1in,height=1.25in,clip,keepaspectratio]{mshell}}]{Michael Shell}
% or if you just want to reserve a space for a photo:

% You can push biographies down or up by placing
% a \vfill before or after them. The appropriate
% use of \vfill depends on what kind of text is
% on the last page and whether or not the columns
% are being equalized.

%\vfill

% Can be used to pull up biographies so that the bottom of the last one
% is flush with the other column.
%\enlargethispage{-5in}

% that's all folks
\end{document}

% --- supplement: Supplement.tex ---

\def\theequation{S\arabic{equation}}
\def\thefigure{S\arabic{figure}}
%
% paper title
% Titles are generally capitalized except for words such as a, an, and, as,
% at, but, by, for, in, nor, of, on, or, the, to and up, which are usually
% not capitalized unless they are the first or last word of the title.
% Linebreaks \\ can be used within to get better formatting as desired.
% Do not put math or special symbols in the title.

\title{Supplementary Material for Fast and Optimal Adaptive Tracking Control: A Novel Meta-Reinforcement Learning via \\ Conditional Generative Adversarial Net}

\author{Mohammad~Mahmoudi,
            Nasser~Sadati, \IEEEmembership{Senior Member,~IEEE}% <-this % stops a space           
\thanks{M. Mahmoudi and N. Sadati are with the Department of Electrical Engineering, Sharif University of Technology, Tehran, Iran (e-mails: m.mahmoudi.f75@gmail.com, sadati@sharif.edu). }}%
% The paper headers
\markboth{Submitted to IEEE Transactions on}%
{Mahmoudi \MakeLowercase{\textit{et al.}}:Fast and Optimal Adaptive Tracking Control: A Novel Meta-Reinforcement Learning via Conditional Generative Adversarial Net}%
% make the title area
\maketitle

% note the % following the last \IEEEmembership and also \thanks - 
% these prevent an unwanted space from occurring between the last author name
% and the end of the author line. i.e., if you had this:
% 
% \author{....lastname \thanks{...} \thanks{...} }
%                     ^------------^------------^----Do not want these spaces!
%
% a space would be appended to the last name and could cause every name on that
% line to be shifted left slightly. This is one of those "LaTeX things". For
% instance, "\textbf{A} \textbf{B}" will typeset as "A B" not "AB". To get
% "AB" then you have to do: "\textbf{A}\textbf{B}"
% \thanks is no different in this regard, so shield the last } of each \thanks
% that ends a line with a % and do not let a space in before the next \thanks.
% Spaces after \IEEEmembership other than the last one are OK (and needed) as
% you are supposed to have spaces between the names. For what it is worth,
% this is a minor point as most people would not even notice if the said evil
% space somehow managed to creep in.

% If you want to put a publisher's ID mark on the page you can do it like
% this:
%\IEEEpubid{0000--0000/00\$00.00~\copyright~2015 IEEE}
% Remember, if you use this you must call \IEEEpubidadjcol in the second
% column for its text to clear the IEEEpubid mark.

% use for special paper notices
%\IEEEspecialpapernotice{(Invited Paper)}

% For peer review papers, you can put extra information on the cover
% page as needed:
% \ifCLASSOPTIONpeerreview
% \begin{center} \bfseries EDICS Category: 3-BBND \end{center}
% \fi
%
% For peerreview papers, this IEEEtran command inserts a page break and
% creates the second title. It will be ignored for other modes.
\IEEEpeerreviewmaketitle

\section{Proof of Lemma 1}

\begin{proof}
Similar to \cite{Yang2015,Xian2004}, by integrating and expanding (52) and using the integration by parts, we have 
\begin{align}\label{eq4.21}
\Omega(t)&=\Omega(0) - \int_0^t \alpha e^T(\tau) \left[\dot{\delta}_3(\tau)+\dot{D}(\tau)-\beta_d sgn(e(\tau)) \right. \nonumber \\
&\left. +\frac{1}{\alpha}\left(\ddot{\delta}_3(\tau)+\ddot{D}(\tau)+M_2(\tau)-\dot{M}_1(\tau)\right)\right] d\tau -\left[ e^T(\tau) \right.\nonumber \\
&\left. \left(\dot{\delta}_3(\tau)+\dot{D}(\tau)+M_1(\tau)\right)\right]^t_0 +[\beta_d \|e(\tau)\|_1]^t_0,
\end{align}
After applying (53) and using the Cauchy-Schwarz inequality,
\begin{align}\label{eq4.21}
&\Omega(t)\ge \int_0^t \alpha e^T(\tau) \left[\beta_d-\|\dot{\delta}_3(\tau)\|_1-\|\dot{D}(\tau)\|_1 \right. \nonumber \\ 
&\left. -\frac{1}{\alpha}\left(\|\ddot{\delta}_3(\tau)\|_1  +\|\ddot{D}(\tau)\|_1+\|M_2(\tau)\|_1
+\|\dot{M}_1(\tau)\|_1\right)\right] d\tau \nonumber \\ 
&+\left[ \beta_d 
 -\|\dot{\delta}_3(t)\|_1-\|\dot{D}(t)\|_1-\|M_1(t)\|_1\right]\|e(t)\|_1.
\end{align}
Now, if $\beta_d$ is chosen based on (54), then $\Omega(t) \ge 0$ holds. 
\end{proof}

\section{Proof of Theorem 1}
\begin{proof}
To show the stability of the closed-loop system, we consider the mentioned positive-definite Lyapunov function (55).
By differentiating (55) and using (33), (39), (42), (43), (45), and (52), after simplification, we obtain
\begin{align}\label{eq4.23}
\dot{L} &= -\alpha \|e\|^2 - \|s\|^2 - k \|s\|^2+s^T\tilde{N}-2\gamma^2\tilde{v}^T \Pi \tilde{v} \nonumber\\
&+2\gamma \tilde{v}^T \left[\sum_{i=0}^p a(t_i)\left(r(t_i)+ \dot{a}^T(t_i)v(t)- \gamma a^T(t_i)v^*(t) \right)\right]\nonumber \\
&+2\gamma \tilde{v}^T a(t) \delta_{TD}(t)+s^T(\beta-\beta_d)sgn(e).
\end{align}
By using Remark 3, (40) and (51) and
employing the Cauchy-Schwarz inequality,
we can obtain an upper bound on \eqref{eq4.23} as follows. Then, by adding and subtracting $\delta^2_{TD}(t)$ and $\varepsilon\|\xi\|^2$ into the right-hand side of the result, we have
\begin{align}\label{eq4.24}
\dot{L} & \le -\bar{\alpha}\|\xi\|^2 +\left[\|s\|_2 \rho (\|\xi\|) \|\xi\|- k \|s\|^2\right]+|\beta-\beta_d|\|s\|\nonumber\\
&+ 2 \gamma \|\tilde{v}\| \sum_{i=0}^p \|a(t_i)\| \left( \|\dot{a}(t_i)\| \|\tilde{v}\|+|\gamma \delta_4(t_i) - \dot{\delta}_4(t_i)| \right) \nonumber\\
&\pm \varepsilon\|\xi\|^2- \delta^2_{TD} -2\gamma^2\tilde{v}^T \Pi \tilde{v} + (2\gamma \tilde{v}^T a(t)+\delta_{TD}) \delta_{TD},
\end{align}
where $\varepsilon>0$, $\bar{\alpha}\triangleq\min\{\alpha,1\}$.
After completing the squares on the bracketed term and applying the difference of squares identity on the last term in \eqref{eq4.24}, by using (49), we obtain
\begin{align}\label{eq4.25}
\dot{L} & \le -\left[\sqrt{k} \|s\|- \frac{\rho (\|\xi\|) \|\xi\|}{2\sqrt{k}} \right]^2 +\frac{\rho^2 (\|\xi\|) \|\xi\|^2}{4 k}\nonumber\\
& -(\bar{\alpha}-\varepsilon)\|\xi\|^2+\left(|\beta-\beta_d|-\varepsilon\|\xi\|\right)\|\xi\|- \delta^2_{TD} \nonumber\\
&-2\gamma^2 \lambda_{\min}(\Pi) \|\tilde{v}\|^2 + 2 \gamma p \dot{\bar{a}} \|\tilde{v}\|^2 +2 \gamma p(\gamma \bar{\delta}_4 + \dot{\bar{\delta}}_4) \|\tilde{v}\| \nonumber\\
&-\gamma^2 \|\tilde{v}^Ta\|^2+\left(r(t)+ \dot{a}^T(t)v(t)- \gamma a^T(t_i)v^*(t) \right)^2 \nonumber\\
&\le -\left[\sqrt{k} \|s\|- \frac{\rho (\|\xi\|) \|\xi\|}{2\sqrt{k}} \right]^2 -\left(\bar{\alpha}-\varepsilon-\frac{\rho^2 (\|\xi\|)}{4 k}\right)\|\xi\|_2^2 \nonumber\\
&-\left(\gamma^2 \|\tilde{v}^Ta\|^2-(\gamma \bar{\delta}_4 + \dot{\bar{\delta}}_4)^2\right) \nonumber\\
&-\left(\left(2\gamma^2 \lambda_{\min}(\Pi) -2 \gamma p \dot{\bar{a}}-\dot{\bar{a}}\right)\|\tilde{v}\|-2 \gamma p(\gamma \bar{\delta}_4 + \dot{\bar{\delta}}_4)\right)\|\tilde{v}\| \nonumber\\
&- \delta^2_{TD}-\left(\varepsilon\|\xi\|-|\beta-\beta_d|\right)\|\xi\|.
\end{align}
If the following inequalities are satisfied, then the right-hand side of the inequality \eqref{eq4.25} is negative, and the upper bound for the derivative of Lyapunov function (55) is given by $\dot{L}(\mu,t) \le -W(\xi,\tilde{v})$, such that $W(.)$ is a continuous positive function. Therefore, all system signals are bounded based on Barbalat's lemma and Theorem 8.4 of \cite{Khalil2002}.
\begin{gather}
\frac{|\beta-\beta_d|}{\varepsilon} \le \|\xi\| \le \rho^{-1}\left(2\sqrt{k(\alpha_3-\varepsilon)}\right)\label{eq4.26},\\
0 < \varepsilon < \bar{\alpha} \le 1, \label{eq4.27}\\
\|\tilde{v}^Ta\| \ge \frac{\gamma \bar{\delta}_4 + \dot{\bar{\delta}}_4}{\gamma},\label{eq4.28}\\
\|\tilde{v}\| \ge \frac{2 \gamma p(\gamma \bar{\delta}_4 + \dot{\bar{\delta}}_4)}{2\gamma^2 \lambda_{\min}(\Pi) -2 \gamma p \dot{\bar{a}}-\dot{\bar{a}}},\label{eq4.29}\\
2\gamma^2 \lambda_{\min}(\Pi) -2 \gamma p \dot{\bar{a}}-\dot{\bar{a}} \ge 0. \label{eq4.30}
\end{gather}
Also, if $\beta=\beta_d$, then $\xi$ converges to zero asymptotically or $e(t)\rightarrow 0$ as $t\rightarrow \infty$. By treating $\tilde{x}(t)$ and $e(t)$ respectively as the output and the input of the stable linear system (18), subsequently, the tracking error $\tilde{x}(t)$ approaches zero asymptotically.
\end{proof}

\section{Proposed Meta-RL Framework for Musculoskeletal Systems}
We apply the proposed Mete-RL framework in the three next steps to determine the muscle activation vector ($u$) as system input at any time so that the markers on the musculoskeletal structure track the desired movement. 

1) we transform the desired movement for the markers in the workspace into the joint space using inverse kinematics \cite{Delp2007} and the following trajectory planning to obtain the TFS coefficient of the desired trajectory. We solve the subsequent trajectory planning by convex-concave programming introduced by \cite{Shen2016}.
\begin{align} 
C_{q_d}= \; & \underset{C_q}{\text{argmin}} {\quad trace\left(C_{q} \Delta C_{q}^T\right) }  \label{eq:cost_TP} \\
 \text{s. t.}  \quad &  \| C_{q} \Phi_\omega^{n_x} (\tau) - q_{obs} (\tau)\|_2^2 \geq d_{obs} ; \; \forall \tau =0,...,\tau_f   \label{eq:cons_obs_TP} \\
& q_{min} \leq C_{q} \Phi_\omega^{n_x} (\tau)  \leq q_{max} ;\; \forall \tau =0,...,\tau_f  \label{eq:cons_RoM_TP} \\
& C_{q} \Phi_\omega^{n_x} \left( j T+ \frac{T}{2} \right) = q_f ;\; \forall j =0,1,... \label{eq:cons_final_TP} \\
& C_{q} \Phi_\omega^{n_x} \left( j T \right) = q_0 ;\; \forall j =0,1,...  \label{eq:cons_start_TP} \\
& C_{q} I_s = 0_ {n \times n_x}, \label{eq:cons_sym_TP} 
\end{align}
where $\Phi_\omega^{n_x}(.)$ and $T$ are the same as (13) and (14); $q^{obs}$ represents the position of obstacles in the joint space; $d_{obs}$ is the minimum acceptable distance between system and obstacle in the joint space; $q_{min}$ and $q_{max}$ demonstrate the admissible biological range of motion (RoM) for the musculoskeletal system; $q_0$ and $q_f$ are the inverse kinematics of starting and final points for the markers, respectively; $\tau$ denotes the discrete time index; $\tau_f$ is the final time for the trajectory planning; $\Delta$ and $I_s$ are matrices as follows:
\begin{align} 
\Delta & \triangleq diag \left( [0, 1, 2^2, ..., n_x^2, 1, 2^2, ..., n_x^2]  \right) \nonumber \\
I_s & \triangleq \left[ 0_{n_x \times (n_x+1)} , I_{n_x \times n_x} \right]^T. \nonumber
\end{align}

The cost function (\ref{eq:cost_TP}) is equivalent to the integral of velocity w.r.t. time over a half-period. Also, we could use other matrices instead of $\Delta$ to minimize other criteria such as acceleration, rate of acceleration, or combination of them. 
We have defined constraints (\ref{eq:cons_obs_TP}) to prevent collisions with obstacles.
Constraints (\ref{eq:cons_RoM_TP}) are related to the admissible RoM. Constraints (\ref{eq:cons_final_TP}) and (\ref{eq:cons_start_TP}) force the musculoskeletal system to reach the final point in the middle of a period and return to the starting point at the end. Constraint (\ref{eq:cons_sym_TP}) has been defined to zero the TFS sine coefficients because of commutation on a similar path and symmetry. Trajectory planning described by (\ref{eq:cost_TP})-(\ref{eq:cons_sym_TP}) is for P2POSC motions. By choosing $j=0$ and ignoring constraint (\ref{eq:cons_sym_TP}), we will achieve P2PREG motions. 

Because the CGAN uses TFS coefficients as a label, we formulated the trajectory planning in the TFS space. Therefore, the trajectory planning output is well compatible with the CGAN. Additionally, the decision variables in the trajectory planning are reduced to the number of TFS coefficients of the desired trajectory; therefore, there is no need to find the desired trajectory at every time step. Hence, the computational cost is drastically reduced. 

2) We train the CGAN by recording data of the musculoskeletal system in the presence of different uncertainties. Therefore, this CGAN can be interpreted as a CPG located in the human nervous system.
Then, we give the TFS coefficient of the desired trajectory obtained by (\ref{eq:cost_TP})-(\ref{eq:cons_sym_TP}) as a label to the Meta-RL structure, demonstrated in Fig. 6, which produces the required torque for the motion control of the musculoskeletal system. 

3) We transform the required torque achieved by the Meta-RL structure into a desired muscle activation vector using the computed muscle control (CMC) method introduced by \cite{Thelen2003}. This method solves a static optimization problem to find an optimal activation level for muscles at any time so that the activation of muscles is minimized subject to muscles can produce the required torque. The CMC method needs to know the model of the muscles (70) and (71). If there is uncertainty in the muscle model, an error will occur in supplying the required torque by CMC. We can consider this error as $\delta_3$ mentioned in Assumption 5. Therefore, our control scheme is robust to the perturbations in the muscle model and completes the model-based functioning of the CMC method.

\section{More Simulation Results for the Musculoskeletal System of the Human Spine}

This section presents two other simulation results for the P2POSC and P2PREG movements of the human spine.

In the simulation of the P2POSC movement, the spine reaches the desired final point for the markers without any collision with an obstacle, then returns to the start point. This motion is repeated with a 5-second period. 
Figs. \ref{fig:F_OSC_Mark}, \ref{fig:F_OSC_Flex}, and \ref{fig:F_OSC_Ext} depict the position of the markers, the flexor muscles' activation level, and the extensor muscles' activation level, respectively. By leveraging the trajectory planning introduced by  (\ref{eq:cost_TP})-(\ref{eq:cons_sym_TP}), we generated the desired position for the markers to avoid obstacles.
As can be seen from the results of this simulation, the system encounters a slight tracking error when the spine reaches its final point. At this moment, the muscles cannot produce the torque required by the system, and the flexor muscles' activation level becomes saturated, as shown in Fig. \ref{fig:F_OSC_Flex}. 
Therefore, the system cannot reach this final point at this desired speed and acceleration. However, Fig. \ref{fig:F_OSC_Mark} indicates that the markers follow their desired value well, except at the endpoint. Therefore, the proposed controller could face the actuator saturation fault well. 
\begin{figure}[!h]
  \centering
  \includegraphics[width=1\columnwidth]{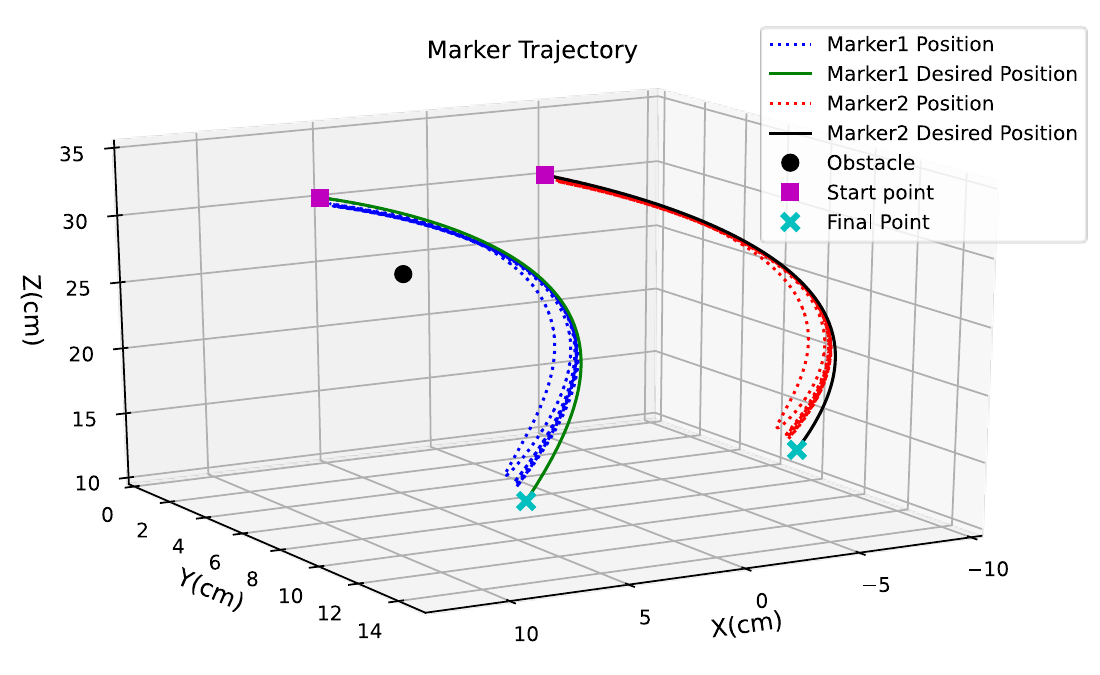}
  \caption{The position of the markers located on the spine in the 3D space.}\label{fig:F_OSC_Mark}
\end{figure}

\begin{figure}[!h]
  \centering
  \includegraphics[width=0.9\columnwidth]{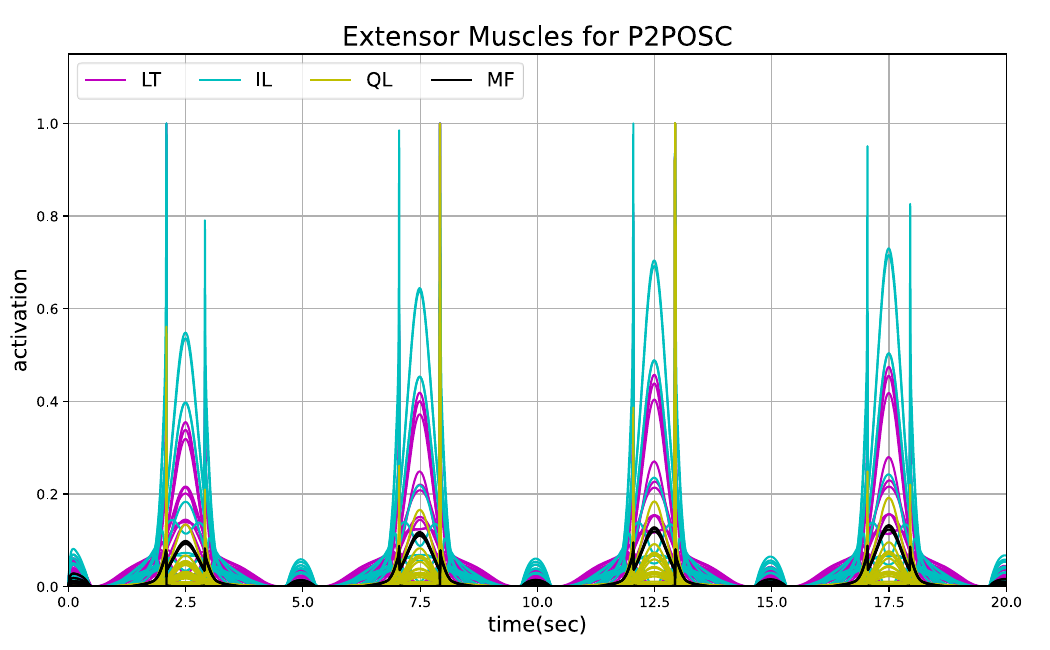}
  \caption{The activation level of the extensor muscles on the right-hand side of the spine for the P2POSC movement.}\label{fig:F_OSC_Ext}
\end{figure}

\begin{figure}[!h]
  \centering
  \includegraphics[width=0.9\columnwidth]{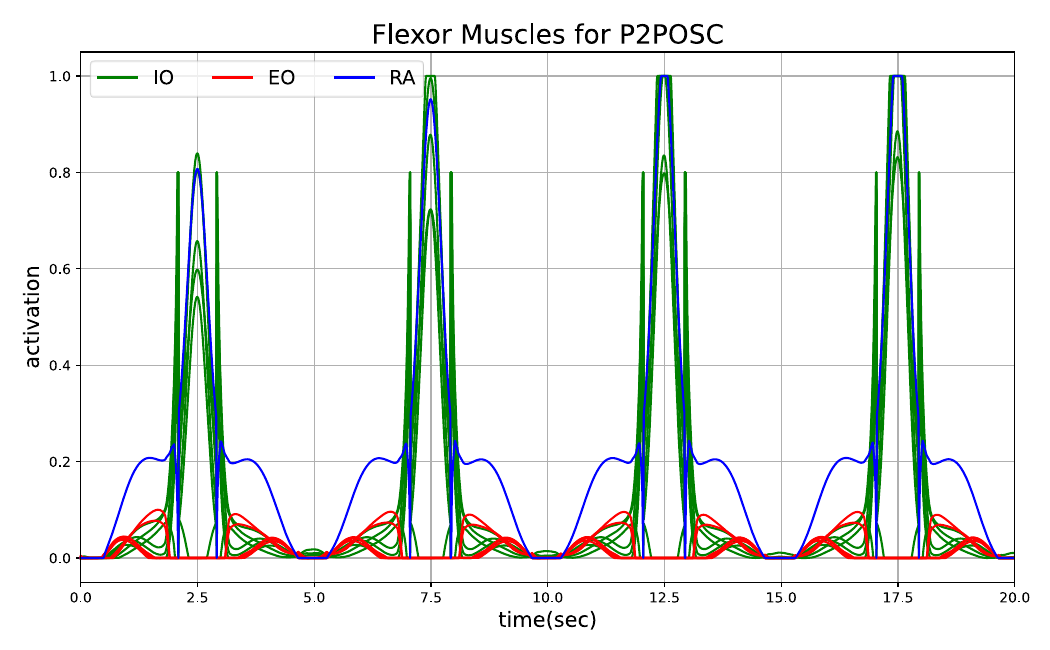}
  \caption{The activation level of the flexor muscles on the right-hand side of the spine for the P2POSC movement. Flexor muscles become saturated at the final point of the movement.}\label{fig:F_OSC_Flex}
\end{figure}

We have conducted the other simulation to show the ability of the proposed controller to regulate P2PREG movement. In this simulation, the spine reaches a desired final point for the markers without any collision with an obstacle and stays at that point ($q_f$) forever. Fig. \ref{fig:F_REG} shows the orientation angles of the spine during this simulation. By leveraging the trajectory planning introduced by  (\ref{eq:cost_TP})-(\ref{eq:cons_sym_TP}), $q_{des}$ is generated to avoid obstacles.
\begin{figure}[!h]
  \centering
  \includegraphics[width=0.9\columnwidth]{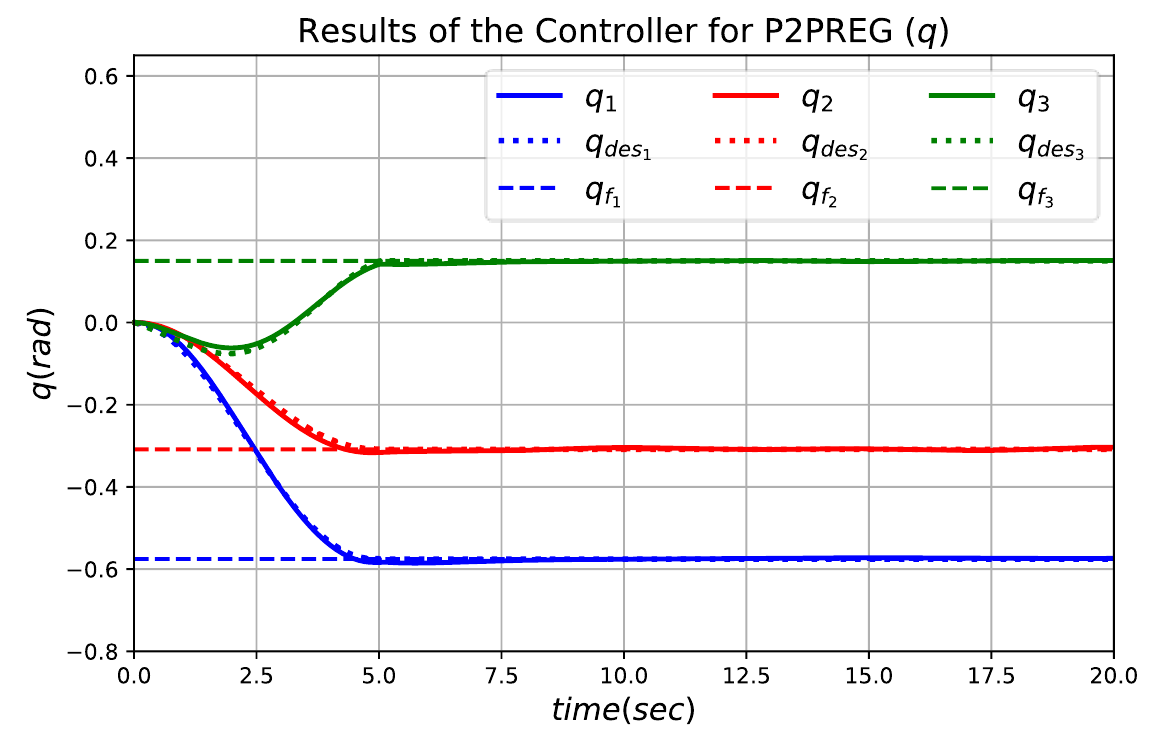}
  \caption{Orientation angles of the spine for P2PREG movement. The spine is fixed after the 5th second in the final point. Dashed lines and dotted curves show the final point and the desired trajectory, respectively.}\label{fig:F_REG}
\end{figure}

In all simulation results, we considered many parametric uncertainties and unmodeled dynamics in modeling biological tissues \cite{Rupp2015}. Therefore, we can use our Meta-RL framework for optimal adaptive control of large-scale systems.

% An example of a floating figure using the graphicx package.
% Note that \label must occur AFTER (or within) \caption.
% For figures, \caption should occur after the \includegraphics.
% Note that IEEEtran v1.7 and later has special internal code that
% is designed to preserve the operation of \label within \caption
% even when the captionsoff option is in effect. However, because
% of issues like this, it may be the safest practice to put all your
% \label just after \caption rather than within \caption{}.
%
% Reminder: the "draftcls" or "draftclsnofoot", not "draft", class
% option should be used if it is desired that the figures are to be
% displayed while in draft mode.
%
%\begin{figure}[!t]
%\centering
%\includegraphics[width=2.5in]{myfigure}
% where an .eps filename suffix will be assumed under latex, 
% and a .pdf suffix will be assumed for pdflatex; or what has been declared
% via \DeclareGraphicsExtensions.
%\caption{Simulation results for the network.}
%\label{fig_sim}
%\end{figure}

% Note that the IEEE typically puts floats only at the top, even when this
% results in a large percentage of a column being occupied by floats.

% An example of a double column floating figure using two subfigures.
% (The subfig.sty package must be loaded for this to work.)
% The subfigure \label commands are set within each subfloat command,
% and the \label for the overall figure must come after \caption.
% \hfil is used as a separator to get equal spacing.
% Watch out that the combined width of all the subfigures on a 
% line do not exceed the text width or a line break will occur.
%
%\begin{figure*}[!t]
%\centering
%\subfloat[Case I]{\includegraphics[width=2.5in]{box}%
%\label{fig_first_case}}
%\hfil
%\subfloat[Case II]{\includegraphics[width=2.5in]{box}%
%\label{fig_second_case}}
%\caption{Simulation results for the network.}
%\label{fig_sim}
%\end{figure*}
%
% Note that often IEEE papers with subfigures do not employ subfigure
% captions (using the optional argument to \subfloat[]), but instead will
% reference/describe all of them (a), (b), etc., within the main caption.
% Be aware that for subfig.sty to generate the (a), (b), etc., subfigure
% labels, the optional argument to \subfloat must be present. If a
% subcaption is not desired, just leave its contents blank,
% e.g., \subfloat[].

% An example of a floating table. Note that, for IEEE style tables, the
% \caption command should come BEFORE the table and, given that table
% captions serve much like titles, are usually capitalized except for words
% such as a, an, and, as, at, but, by, for, in, nor, of, on, or, the, to
% and up, which are usually not capitalized unless they are the first or
% last word of the caption. Table text will default to \footnotesize as
% the IEEE normally uses this smaller font for tables.
% The \label must come after \caption as always.
%
%\begin{table}[!t]
%% increase table row spacing, adjust to taste
%\renewcommand{\arraystretch}{1.3}
% if using array.sty, it might be a good idea to tweak the value of
% \extrarowheight as needed to properly center the text within the cells
%\caption{An Example of a Table}
%\label{table_example}
%\centering
%% Some packages, such as MDW tools, offer better commands for making tables
%% than the plain LaTeX2e tabular which is used here.
%\begin{tabular}{|c||c|}
%\hline
%One & Two\\
%\hline
%Three & Four\\
%\hline
%\end{tabular}
%\end{table}

% Note that the IEEE does not put floats in the very first column
% - or typically anywhere on the first page for that matter. Also,
% in-text middle ("here") positioning is typically not used, but it
% is allowed and encouraged for Computer Society conferences (but
% not Computer Society journals). Most IEEE journals/conferences use
% top floats exclusively. 
% Note that, LaTeX2e, unlike IEEE journals/conferences, places
% footnotes above bottom floats. This can be corrected via the
% \fnbelowfloat command of the stfloats package.

% if have a single appendix:
%\appendix[Proof of the Zonklar Equations]
% or
%\appendix  % for no appendix heading
% do not use \section anymore after \appendix, only \section*
% is possibly needed

% use appendices with more than one appendix
% then use \section to start each appendix
% you must declare a \section before using any
% \subsection or using \label (\appendices by itself
% starts a section numbered zero.)
%

% Can use something like this to put references on a page
% by themselves when using endfloat and the captionsoff option.
\ifCLASSOPTIONcaptionsoff
  \newpage
\fi

% trigger a \newpage just before the given reference
% number - used to balance the columns on the last page
% adjust value as needed - may need to be readjusted if
% the document is modified later
%\IEEEtriggeratref{8}
% The "triggered" command can be changed if desired:
%\IEEEtriggercmd{\enlargethispage{-5in}}

% references section

% can use a bibliography generated by BibTeX as a .bbl file
% BibTeX documentation can be easily obtained at:
% http://mirror.ctan.org/biblio/bibtex/contrib/doc/
% The IEEEtran BibTeX style support page is at:
% http://www.michaelshell.org/tex/ieeetran/bibtex/
%\bibliographystyle{IEEEtran}
% argument is your BibTeX string definitions and bibliography database(s)
%\bibliography{IEEEabrv,../bib/paper}
%
% <OR> manually copy in the resultant .bbl file
% set second argument of \begin to the number of references
% (used to reserve space for the reference number labels box)
\bibliographystyle{IEEEtran}
%\bibliography{IEEEabrv,My_Collection}
\bibliography{IEEEabrv,My_Collection_New}
%\bibliography{IEEEfull,My_Collection_New}

% biography section
% 
% If you have an EPS/PDF photo (graphicx package needed) extra braces are
% needed around the contents of the optional argument to biography to prevent
% the LaTeX parser from getting confused when it sees the complicated
% \includegraphics command within an optional argument. (You could create
% your own custom macro containing the \includegraphics command to make things
% simpler here.)
%\begin{IEEEbiography}[{\includegraphics[width=1in,height=1.25in,clip,keepaspectratio]{mshell}}]{Michael Shell}
% or if you just want to reserve a space for a photo:

% You can push biographies down or up by placing
% a \vfill before or after them. The appropriate
% use of \vfill depends on what kind of text is
% on the last page and whether or not the columns
% are being equalized.

%\vfill

% Can be used to pull up biographies so that the bottom of the last one
% is flush with the other column.
%\enlargethispage{-5in}

% that's all folks